\documentclass[a4paper,english,cleveref]{lipics-v2021}

\usepackage{graphicx}
\usepackage{amssymb,amsfonts,amsmath}
\usepackage[utf8]{inputenc}
\usepackage[T1]{fontenc}
\usepackage{complexity}
\usepackage{xspace}
\usepackage{booktabs}
\usepackage{nicefrac}

\captionsetup[subfigure]{labelformat=simple,size=footnotesize}

\title{Adjacency Graphs of Polyhedral Surfaces}

\author{Elena~Arseneva}{Universit\`{a} della Svizzera italiana, Switzerland}{elena.arseneva@usi.ch}{https://orcid.org/0000-0002-5267-4512}{}

\author{Linda~Kleist}{Technische Universit\"at Braunschweig, Germany}{kleist@ibr.cs.tu-bs.de}{https://orcid.org/0000-0002-3786-916X}{}

\author{Boris~Klemz}{University W\"urzburg, Germany}{firstname.lastname ``at'' uni-wuerzburg.de}{https://orcid.org/0000-0002-4532-3765}{partially supported by DFG project WO$\,$758/11-1.}

\author{Maarten~L\"offler}{Utrecht University, the Netherlands}{m.loffler@uu.nl}{}{}

\author{Andr\'e~Schulz}{FernUniversit\"at in Hagen, Germany}{andre.schulz@fernuni-hagen.de}{https://orcid.org/0000-0002-2134-4852}{}

\author{Birgit~Vogtenhuber}{Technische Universit\"at Graz, Austria}{bvogt@ist.tugraz.at}{https://orcid.org/0000-0002-7166-4467}{partially supported by the Austrian
  Science Fund within the collaborative DACH project
  \emph{Arrangements and Drawings} as FWF project \mbox{I 3340-N35}.}

\author{Alexander Wolff}{University W\"urzburg,  Germany}{}{https://orcid.org/0000-0001-5872-718X}{}

\authorrunning{E.~Arseneva, L.~Kleist, B.~Klemz, M.~L\"offler,
  A.~Schulz, B.~Vogtenhuber, and A.~Wolff}

\Copyright{Elena Arseneva, Linda Kleist, Boris Klemz, Maarten
  L\"offler,\\Andr\'e Schulz, Birgit Vogtenhuber, and Alexander Wolff}

\keywords{polyhedral complexes, realizability, contact representation}

\begin{CCSXML}
<ccs2012>
   <concept>
       <concept_id>10002950.10003624.10003633.10003643</concept_id>
       <concept_desc>Mathematics of computing~Graphs and surfaces</concept_desc>
       <concept_significance>300</concept_significance>
       </concept>
   <concept>
       <concept_id>10002950.10003624.10003625.10003626</concept_id>
       <concept_desc>Mathematics of computing~Combinatoric problems</concept_desc>
       <concept_significance>300</concept_significance>
       </concept>
 </ccs2012>
\end{CCSXML}
\ccsdesc[300]{Mathematics of computing~Graphs and surfaces}
\ccsdesc[300]{Mathematics of computing~Combinatoric problems}

\acknowledgements{We thank the organizers of Dagstuhl Seminar 19352
  ``Computation in Low-Dimensional Geometry and Topology'' for
  bringing us together.  We are particularly indebted to seminar
  participant Arnaud de Mesmay for asking a question that initiated
  our research.  We also thank Louis Esperet for his pointer to
  reference~\cite{thomassenColorCriticalGraphs} and the connection to
  an open problem concerning the chromatic number of adjacency graphs.
  Last but not least, we thank the anonymous referees of our EuroCG
  2020 and SoCG 2021 submission for their helpful comments.  }

\relatedversion{A preliminary version of this article appeared in the
  proceedings of the \emph{International Symposium on Computational
    Geometry} (SoCG) 2021 \cite{DBLP:conf/compgeom/ArsenevaKKL0V021}.}

\Crefname{figure}{Fig.}{Figs.}
\crefname{figure}{Figure}{Figures}

\newtheorem{question}[theorem]{Question}

\newcommand{\bigO}{\mathcal{O}}
\graphicspath{{./figures/}}

\newcommand{\Sur}{\ensuremath{\mathcal{S}}\xspace}
\newcommand{\Gra}{\ensuremath{\mathcal{G}}\xspace}
\newcommand{\TheThreeTree}{triple-stacked triangle\xspace}
\newcommand{\com}{\ensuremath{\zeta}\xspace}

\hideLIPIcs
\nolinenumbers

\begin{document}

\maketitle

\begin{abstract}
  We study whether a given graph can be realized as an adjacency graph
  of the polygonal cells of a polyhedral surface in~$\mathbb{R}^3$.
  We show that every graph is realizable as a polyhedral surface with
  arbitrary polygonal cells, and that this is not true if we require
  the cells to be convex.  In particular, if the given graph contains
  $K_5$, $K_{5,81}$, or any nonplanar $3$-tree as a subgraph, no such
  realization exists.  On the other hand, all planar graphs,
  $K_{4,4}$, and $K_{3,5}$ can be realized with convex cells.  The
  same holds for any subdivision of any graph where each edge is
  subdivided at least once, and, by a result from McMullen et
  al.~(1983), for any hypercube.

  Our results have implications on the maximum density of graphs
  describing polyhedral surfaces with convex cells: The realizability
  of hypercubes shows that the maximum number of edges over all
  realizable $n$-vertex graphs is in $\Omega(n \log n)$.  From the
  non-realizability of $K_{5,81}$, we obtain that any realizable
  $n$-vertex graph has $\bigO(n^{9/5})$ edges.  As such, these graphs
  can be considerably denser than planar graphs, but not arbitrarily
  dense.
\end{abstract}

\section{Introduction}

A {\em polyhedral surface} consists of a set of interior-disjoint
polygons embedded in $\mathbb{R}^3$, where each edge may be shared by
at most two polygons.  Polyhedral surfaces have been long studied in
computational geometry, and have well-established applications in for
instance computer graphics~\cite{Dobkin92computationalgeometry} and
geographical information science~\cite{fmp-acggi-97}.

Inspired by those applications, classic work in this area often
focuses on restricted cases, such as surfaces of (genus 0)
polyhedra~\cite {DBLP:journals/dcg/AronovKOV03,10.1145/276884.276901},
or $x,y$-monotone surfaces known as {\em polyhedral
  terrains}~\cite{COLE198911}.  Such surfaces are, in a sense,
2-dimensional. One elegant way to capture this ``essentially
2-dimensional behaviour'' is to look at the adjacency graph (see below
for a precise definition) of the surface: in both cases described
above, this graph is \emph{planar}.  In fact, by Steinitz's Theorem
the adjacency graphs of surfaces of convex polyhedra are exactly the
3-connected planar graphs~\cite{S22}.  If we allow the surface of a
polyhedron to have a {\em boundary}, then every planar graph has a
representation as such a polyhedral surface~\cite{DGHKK12}.

Recently, applications in computational topology have intensified the
study of polyhedral surfaces of non-trivial topology.  In sharp
contrast to the simpler case above, where the classification is
completely understood, little is known about the class of adjacency
graphs that describe general polyhedral surfaces.  In this paper we
investigate this graph class.
  
\subparagraph{Our model.}

A polyhedral surface $\Sur = \{S_1, \ldots, S_n\}$ is a set of $n$
closed polygons embedded in $\mathbb{R}^3$ such that, for all pairwise
distinct indices $i,j,k\in \{1,2,\dots,n\}$:
\begin {itemize}
\item $S_i$ and $S_j$ are interior-disjoint (with respect to the 2D
  relative interior of the objects);
\item if $S_i \cap S_j \neq \emptyset$, then $S_i \cap S_j$ is either
  a single %
  corner or a complete {\em side} of both~$S_i$ and~$S_j$;
\item if $S_i \cap S_j \cap S_k \neq \emptyset$ then it is a single
  corner (i.e., a side is shared by at most two polygons).
\end {itemize}
To avoid confusion with the corresponding graph elements, we
consistently refer to polygon vertices as \emph{corners} and to
polygon edges as \emph{sides}.
  
The \emph{adjacency graph} of a polyhedral surface \Sur, denoted as
$\Gra(\Sur)$, is the graph whose vertices correspond to the polygons
of \Sur and which has an edge between two vertices if and only if the
corresponding polygons of \Sur share a side. Note that a
corner--corner contact is allowed in our model but does not induce an
edge in the adjacency graph. Further observe that the adjacency graph
does not uniquely determine the topology of the surface.
\Cref{fig:3deg} shows an example of a polyhedral surface and its
adjacency graph.  We say that a polyhedral surface \Sur
\emph{realizes} a graph $G$ if $\Gra(\Sur)$ is isomorphic to $G$.  In
this case, we write $\Gra(\Sur) \simeq G$.
  
\begin{figure}[htb]
  \centering
  \begin{subfigure}[b]{0.23\textwidth}
    \centering \includegraphics[page=3,scale=.25]{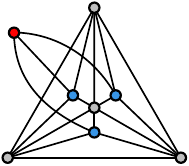}
    \caption{a surface \Sur.}
    \label{fig:3degB}
  \end{subfigure}
  \hfil
  \begin{subfigure}[b]{0.23\textwidth}
    \centering \includegraphics[page=1]{3-tree-possible}
    \caption{the graph $\Gra(\Sur)$.}
    \label{fig:3degA}
  \end{subfigure}
  \caption{A convex-polyhedral surface \Sur and its nonplanar
    3-degenerate adjacency graph $\Gra(\Sur)$.}
  \label{fig:3deg}
\end{figure}

If every polygon of a polyhedral surface \Sur is strictly convex, we
call \Sur a \emph{convex-polyhedral} surface.  Our paper focuses on
convex-polyhedral surfaces; refer to \Cref{fig:non-convex} for an
example of a general (nonconvex) polyhedral surface.  We emphasize
that we do not require that every polygon side has to be shared with
another polygon.

Our work relates to two lines of research: Steinitz-type problems and
contact representations.

\subparagraph{Steinitz-type problems.}

Steinitz's Theorem gives the positive answer to the
\emph{realizability problem} for convex polyhedra.  This result is
typically stated in terms of the realizabilty of a graph as the
1-skeleton of a convex polyhedron.  Our perspective comes from the
dual point of view, describing the adjacencies of the faces instead of
the adjacencies of the vertices.

Steinitz's Theorem settles the problem raised in this paper for
surfaces that are homeomorphic to a sphere.  A slightly stronger
version of Steinitz's Theorem by Gr\"unbaum and Barnette~\cite{BG69}
states that every planar 3-connected graph can be realized as the
1-skeleton of a convex polyhedron with the prescribed shape of one
face.  Consequently, also in our model we can prescribe the shape of
one polygon if the adjacency graph of the surface is planar.  For
other classes of polyhedra only very few partial results for their
graph-theoretic characterizations are known~\cite{EM14,HN11}.  No
generalization for Steinitz's Theorem for surfaces of higher genus is
known, and therefore there are also no results for the dual
perspective.  In higher dimensions, Richter-Gebert's Universality
Theorem implies that the realizability problem for abstract
4-polytopes is
$\exists \mathbb{R}$-complete~\cite{richterGebertrealization}.

McMullen et al.\ \cite{msw-p2me3-IJM83} constructed a closed
polyhedral surface of genus 4097~-- with only 4096 polygons.  For the
first few steps of their inductive construction; see
\Cref{fig:mcmullen}.  Their construction answered a question that
Barnette posed in 1980; he asked whether there are polyhedra in
$\mathbb{R}^3$ whose polygonal faces all have arbitrarily many sides.
Later, Ziegler \cite{z-pshg-DDG08} gave a different construction of
the family of surfaces presented by McMullen et al.

\subparagraph{Simplicial complexes.}

The algorithmic problem of determining whether a given
$k$-dimen\-sional (abstract) simplicial complex embeds in
$\mathbb R^d$ is an active field of
research~\cite{vcadek2017algorithmic,undecidable2020,matousek20011hardness,mesmay2020embeddability,Skopenkov2020,skopenkov2020invariants}.
There exist at least three interesting notions of embeddability:
linear, piecewise linear, and topological embeddability, which usually
are not the same~\cite{matousek20011hardness}.  The case
$(k,d)=(1,2)$, however, corresponds to testing graph planarity, and
thus, all three notions coincide, and the problem lies in~\P.

While some necessary conditions for the geometric \emph{realizability
  of simplicial complexes} are
known~\cite{novik2000note,timmreck2008necessary}, the problem of
recognizing the linear embeddability of $k$-dimensional complexes into
$\mathbb{R}^d$ is conjectured to be \NP-hard for every fixed pair
$(k,d)$ with $3 \leq d \leq 3k + 1$~\cite[Conjecture
3.2.2]{skopenkov2020invariants}.  Recently, Abrahamsen, Kleist, and
Miltzow~\cite{GeometricEmbeddings2021} showed that deciding whether a
2-simplex (e.g., a set of triangles with prescribed edge contacts),
linearly embeds in $\mathbb R^3$ is $\exists\mathbb R$-complete; this
remains true even if a piecewise linear embedding is given. More
generally, they showed $\exists\mathbb R$-completeness for the
decision problem of linearly embedding a $k$-simplex in $\mathbb R^d$
for all $d\geq 3$ and $k\in\{d,d-1\}$.

Concerning piecewise-linear embeddability, determining whether a given
$k$-complex embeds piecewise-linearly in $\mathbb R^d$ for the cases
$d=3$ and $k \in \{2,3\}$ is known to be
\NP-hard~\cite{mesmay2020embeddability} and
decidable~\cite{embeddibility3sphere}.  In higher dimensions, the
problem is polynomial time solvable for $d\geq 4 $ and
$k < \nicefrac{2}{3}\cdot(d-1)$~\cite{vcadek2017algorithmic}, \NP-hard
for $d\geq 4,k \geq \nicefrac{2}{3}\cdot(d-1)$ and even undecidable
for $d \geq 5$ and
$k\in\{d,d-1\}$~\cite{undecidable2020,Skopenkov2020}.

\subparagraph{Contact representations.}

A realization of a graph as a polyhedral surface can be viewed as a
\emph{contact representation} of this graph with polygons in
$\mathbb R^3$, where a contact between two polygons is realized by
sharing an entire polygon side, and each side is shared by at most two
polygons.  In a general contact representation of a graph, the
vertices are represented by interior-disjoint geometric objects, where
two objects touch if and only if the corresponding vertices are
adjacent.  In concrete settings, the object type (disks, lines,
polygons, etc.), the type of contact, and the embedding space is
specified.  Numerous results concerning which graphs admit a contact
representation of some type are known; we review some of them.

The well-known Andreev--Koebe--Thurston circle packing
theorem~\cite{a-cpls-70,Koebe36} states that every planar graph admits
a contact representation by touching disks in~$\mathbb{R}^2$.  A less
known but impactful generalization by
Schramm~\cite[Theorem~8.3]{schramm} guarantees that every
triangulation (i.e., maximal planar graph) has a contact
representation in $\mathbb{R}^2$ where every inner vertex corresponds
to a homothetic copy of a prescribed smooth convex set; the three
outer vertices correspond to prescribed smooth arcs whose union is a
simple closed curve.  If the prototypes and the curve are polygonal,
i.e., are not smooth, then there still exists a contact
representation, however, with the following shortcomings: The sets
representing inner vertices may degenerate to points, which may lead
to extra contacts.  As observed by Gon{\c{c}}alves et
al.~\cite{JGAA-509}, Schramm's result implies that every subgraph of a
4-connected triangulation has a contact representation with aligned
equilateral triangles and similarly, every inner triangulation of a
4-gon without separating 3- and 4-cycles has a hole-free contact
representation with squares~\cite{s-stpc-IJM84,Fel13}.

While for the afore-mentioned existence results there are only
iterative procedures that compute a series of representations
converging to the desired one, there also exist a variety of shapes
for which contact representations can be computed efficiently.
Allowing for sides of one polygon to be \emph{contained} in the side
of adjacent polygons, Duncan et al.~\cite{DGHKK12} showed that, in
this model, every planar graph can be realized by hexagons in the
plane and that hexagons are sometimes necessary.  Assuming
side--corner contacts, de Fraysseix et al.~\cite{FMR94} showed that
every plane graph has a triangle contact representation and how to
compute one.  Gansner et al.~\cite{GHK10a} presented linear-time
algorithms for triangle side-contact representations for outerplanar
graphs, square grid graphs, and hexagonal grid graphs.  Kobourov et
al.~\cite{KMN13} showed that every 3-connected cubic planar graph
admits a triangle side-contact representation whose triangles form a
tiling of a triangle.  For a survey of planar graphs that can be
represented by dissections of a rectangle into rectangles, we refer to
Felsner~\cite{Fel13}.  Moreover, there exist linear-time algorithms to
compute hole-free contact representations of triangulations where each
vertex is represented by a 8-sided rectilinear
polygon~\cite{de1994triangle,doi:10.1137/S0097539796308874,doi:10.1137/0222035}.
In fact, Alam et al.~\cite{alam2013computing} showed that there exist
contact representations where the area of the polygons can even be
prescribed (however, no polynomial-time algorithm is known to compute
such representations).  On the negative side, Breu and
Kirkpatrick~\cite{DBLP:journals/comgeo/BreuK98} showed that
recognizing whether a given graph admits a contact representation with
\emph{unit} disks is \NP-hard.  Later, Klemz et
al.~\cite{DBLP:journals/jocg/KlemzNP22} showed that this statement
remains true even when restricted to outerplanar graphs.  Moreover,
Bowen et al.~\cite{DBLP:conf/gd/BowenDLR0T15} showed that if the unit
disk contact representation is additionally required to respect a
given rotation system, the recognition problem is \NP-hard even when
restricted to trees.

Representations with one-dimensional objects in $\mathbb{R}^2$ have
also been studied.  While every plane bipartite graph has a contact
representation with horizontal and vertical segments~\cite{FMP91},
Hlin\v{e}n\'{y}~\cite{HLINENY200195} showed that recognizing segment
contact graphs is an \NP-complete problem even when restricted to
planar graphs.  Hlin\v{e}n\'{y}~\cite{Hli98} also showed that
recognizing curve contact graphs where no four curves meet in one
point is \NP-complete for planar graphs whereas the same question can
be solved in polynomial time for planar triangulations.

Less is known about contact representations in higher dimensions.
Every graph is the contact graph of interior-disjoint convex polytopes
in $\mathbb{R}^3$ where contacts are shared 2-dimensional
facets~\cite{tietze1905}.  Hlin\v{e}n\'{y} and
Kratochv{\'\i}l~\cite{HK01} proved that the recognition of unit-ball
contact graphs in $\mathbb{R}^d$ is \NP-hard for $d=3,4,$ and~8.
Felsner and Francis~\cite{FF11} showed that every planar graph has a
contact representation with axis-parallel cubes in~$\mathbb{R}^3$.
For proper side contacts, Kleist and Rahman~\cite{kleist14} proved
that every subgraph of an Archimedean grid can be represented with
unit cubes, and every subgraph of a $d$-dimensional grid can be
represented with $d$-cubes.  Evans et al.~\cite{erssw-rghtp-GD19}
showed that every graph has a contact representation where vertices
are represented by convex polygons in $\mathbb{R}^3$ and edges by
shared corners of polygons, %
and gave polynomial-volume representations for bipartite, 1-planar,
and cubic graphs.

\subparagraph{Contribution and organization.}

We show that for every graph~$G$ there exists a polyhedral
surface~\Sur such that $G$ is the adjacency graph of~\Sur; see
\Cref{sec:general}.  For convex-polyhedral surfaces, the situation is
more intricate; see \Cref{sec:convex}.  Every planar graph can be
realized by a \emph{flat} convex-polyhedral surface
(\Cref{planar:2d}), i.e., a convex-polyhedral surface
in~$\mathbb R^2$.  Some nonplanar graphs cannot be realized by
convex-polyhedral surfaces in~$\mathbb{R}^3$; in particular this holds
for all supergraphs of~$K_5$ (\Cref{obs:K5}), of~$K_{5,81}$
(\Cref{thm:k581}), and of all nonplanar $3$-trees
(\Cref{thm:threeTree}).  Nevertheless, many nonplanar graphs,
including $K_{4,4}$ and $K_{3,5}$, have such a realization
(\Cref{prop:K44,prop:K35}).  We remark that all our positive results
hold for subgraphs and subdivisions as well
(\cref{prop:subgraphsSubdivisions}).  Similarly, our negative results
carry over to supergraphs.

Our results have implications on the maximum density of adjacency
graphs of convex-polyhedral surfaces; see \Cref{sec:density-bounds}.
On the one hand, the non-realizability of $K_{5,81}$ implies that the
number of edges of any realizable $n$-vertex graph is upperbounded by
$\bigO(n^{9/5})$ edges.  On the other hand, the realizability of
hypercubes (which we derive from the above-mentioned result of
McMullen et al.~\cite{msw-p2me3-IJM83}; see \Cref{sec:hypercubes})
implies that there are realizable graphs with $n$ vertices and
$\Omega(n \log n)$ edges.  Hence these graphs can be considerably
denser than planar graphs, but not arbitrarily dense.

\section{General Polyhedral Surfaces}
\label{sec:general}

We start with a simple positive result.

\begin{proposition}
  \label{prop:nonconvex}
  For every graph $G$, there exists a polyhedral surface~\Sur such
  that $\Gra(\Sur) \simeq G$.
\end{proposition}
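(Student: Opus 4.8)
The plan is to realize every graph $G$ by placing one polygon per vertex and using long thin ``strip'' polygons to route each edge as a shared side. First I would fix an arbitrary vertex ordering $v_1,\dots,v_n$ and assign to each $v_i$ a large polygon $P_i$ lying in (or near) a dedicated plane, say the plane $z = i$, positioned far apart horizontally so that polygons of different vertices do not accidentally touch. Each polygon $P_i$ needs one free side reserved for every edge incident to $v_i$; since $P_i$ may have as many sides as we like (no convexity required!), I can simply make $P_i$ a polygon with $\deg(v_i)$ designated ``ports'', i.e.\ short sub-segments of its boundary, one per incident edge, all lying on some common horizontal line segment so they are easy to reach.

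The key idea is that each edge $e = v_iv_j$ is represented by a single additional polygon $Q_e$ that acts as a bridge: $Q_e$ is a (possibly very bent, non-convex) polygon that has one side glued exactly to the port of $P_i$ reserved for $e$, and another side glued exactly to the port of $P_j$ reserved for $e$, and whose interior is a thin tube running through space between the two planes $z=i$ and $z=j$. This way $Q_e$ is side-adjacent to both $P_i$ and $P_j$ and to nothing else, so in the adjacency graph it becomes a degree-two vertex on the edge $v_iv_j$ — i.e.\ we are really realizing the subdivision $G'$ of $G$ in which every edge is subdivided once. To get $G$ itself on the nose, I would instead use a more careful construction: route the bridge polygon so that the two ports it connects are the only contacts, but make the bridge share its side directly — that is, pick the port of $P_i$ and the port of $P_j$ to be literally the same segment in $\mathbb R^3$. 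Concretely, bend $P_i$ and $P_j$ (each still a single simple polygon) out of its home plane along a thin flap so that the flap of $P_i$ and the flap of $P_j$ meet along a common segment; then no bridge polygon is needed for that edge at all, and the adjacency graph is exactly $G$. The flaps for the different edges at $v_i$ are attached at the distinct ports and folded in distinct directions, so the flaps do not interfere with each other.

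Carrying this out, the main steps are: (1) lay out the $n$ base polygons in general position with pairwise disjoint bounding regions; (2) on each $P_i$ carve $\deg(v_i)$ disjoint boundary ports; (3) for each edge $v_iv_j$ choose disjoint ``corridors'' in $\mathbb R^3$ — pairwise non-crossing thin tubes, which exist because we have three dimensions and only finitely many edges, so we can perturb to avoid all intersections — and fold a flap of $P_i$ and a flap of $P_j$ down their respective corridors to meet along a shared segment; (4) verify the three axioms of a polyhedral surface: interiors are disjoint (guaranteed by the general-position layout and the thinness/disjointness of corridors), each intersection of two polygons is a full shared side or a corner (guaranteed because the only engineered contacts are the glued flap-edges), and no side is shared by three polygons (each flap is glued to exactly one partner flap). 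A clean alternative that avoids the fiddly folding is to prove the statement first for subdivisions via the bridge-polygon construction and then observe that one can collapse each bridge; but I would likely present the flap construction directly, or equivalently start from a planar ``book embedding''-style layout and push conflicting edges into different heights.

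The hard part will be the bookkeeping in step (3)–(4): one must argue carefully that the flaps and corridors can be chosen pairwise non-interfering and that bending a flat convex polygon into a flapped simple polygon keeps it a single non-self-intersecting polygon whose only contacts with the rest of the arrangement are the intended glued edges. This is where the freedom of working in $\mathbb R^3$ (rather than the plane) and of allowing arbitrarily many reflex corners is essential — it is exactly the ingredient that fails for convex cells, which is why \Cref{sec:convex} needs entirely different arguments. I expect no single deep obstacle, only the need to make the perturbation/general-position claims precise; everything else is an explicit geometric construction.
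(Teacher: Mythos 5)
Your central mechanism does not fit the model. In this paper a polyhedral surface is a set of \emph{polygons} embedded in $\mathbb{R}^3$: each cell is flat, has a supporting plane, and its 2D relative interior is what the disjointness condition refers to. ``Bending a flap of $P_i$ out of its home plane'' destroys exactly this property --- a folded flap turns the cell into a non-planar object, i.e.\ into a union of two polygons, not a single polygon. So your step (3) either produces invalid cells, or, if you make each flap its own cell, you are back to realizing the subdivision of $G$ rather than $G$ itself. The same objection kills the ``collapse each bridge'' alternative: the bridge $Q_e$ and the vertex polygon $P_i$ lie in different planes, so their union cannot be merged into one polygon, and as long as $Q_e$ remains a separate cell you have only realized the subdivision $G'$ --- a genuinely weaker statement, which the paper proves separately (even with convex cells) in \Cref{obs:Knsub}. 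Note also that your base layout already forces the bending: two flat cells can share a side only along the intersection line of their supporting planes, and your polygons live in (or near) the parallel planes $z=i$, which do not intersect at all.

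The missing idea is an arrangement in which \emph{all} pairwise side contacts are simultaneously possible with flat cells. The paper achieves this by placing the $n$ vertex polygons like pages of a book: each starts as a rectangle in its own half-plane, all sharing one common segment $s$ (the spine), so every pair of supporting planes meets precisely in the spine line. Each page is then trimmed into a comb so that, for every edge $uv$ of $G$, a dedicated subsegment of $s$ is a complete side of exactly $P_u$ and $P_v$ and is disjoint from all other polygons; non-edges simply get no common tooth. This one global choice replaces your ports, corridors and flaps, eliminates the perturbation bookkeeping, and keeps every cell a single flat (comb-shaped, non-convex) polygon, which is precisely the freedom that the general --- as opposed to convex --- setting provides.
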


\begin{proof}
  We start our construction with~$n=|V(G)|$ interior-disjoint
  rectangles such that there is a line segment~$s$ that acts as a
  common side of all these rectangles.  We then cut away parts of each
  rectangle thereby turning it into a comb-shaped polygon as
  illustrated in \Cref{fig:non-convex}.  These polygons represent the
  vertices of~$G$.  For each pair~$(P,P')$ of polygons that are
  adjacent in~$G$, there is a subsegment~$s_{PP'}$ of~$s$ such
  that~$s_{PP'}$ is a side of both~$P$ and~$P'$ that is disjoint from
  the remaining polygons.  In particular, every polygon side is
  adjacent to at most two polygons.  The result is a polyhedral
  surface whose adjacency graph is~$G$.
\end{proof}

\begin{figure}[htb]
  \centering \includegraphics[page=3]{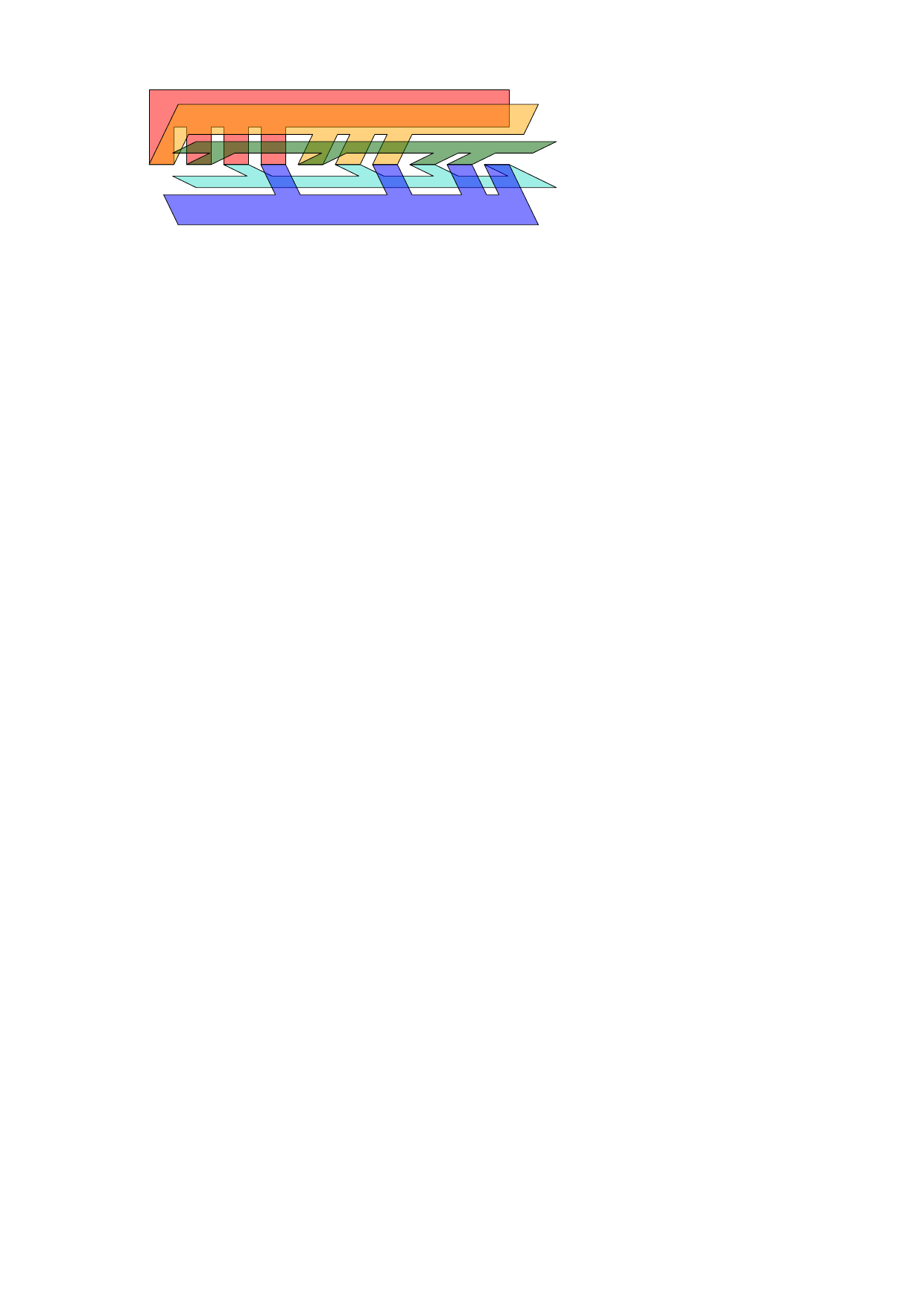}
  \caption{A realization of~$K_5$ by arbitrary polygons with side
    contacts in~$\mathbb{R}^3$.}
  \label{fig:non-convex}
\end{figure}

In our construction, the complexity of each polygon depends on the
\emph{degree} of the vertex it represents. If we insist on strictly
convex polygons and full side contacts, this is clearly also
necessary.  One interesting question is how tight this dependence is.

To make this question precise, for a polyhedral surface $\Sur$, let
$\com(\Sur)$ be the total complexity of $\Sur$; that is, the sum of
the number of vertices (or edges, which is the same) of all the
polygons in $\Sur$.  Then, for a graph $G$, define
\[\displaystyle \com(G) = \min_{\Sur : \Gra(\Sur) \simeq G}
  \com(\Sur)\] to be the complexity of the best possible
representation of $G$.  Proposition~\ref {prop:nonconvex} implies an
upper bound on $\com(G)$.

\begin{corollary}
  \label{cor:3d}
  Let $G$ be a graph with $m=|E(G)|$ edges.  Then $\com(G) \le 6m$.
\end{corollary}

\begin {proof}
  Our construction for Proposition~\ref {prop:nonconvex} represents a
  vertex of degree $d$ by a polygon with at most $4d+2$ sides. It is
  not hard to improve this to exactly $3d$: instead of a rectangular
  comb, we can use a triangular one, with triangular gaps between
  successive teeth.  The corollary now follows from the fact that the
  sum of degrees is twice the number of edges in~$G$.
\end {proof}

For some graphs, this bound is tight; for example, the graph which
consists of a single edge.  However, some graphs admit much better
embeddings.  The lower bound on the number of sides for a vertex of
degree $d$ is exactly $d$. There are graphs that realize this lower
bound: they are exactly the adjacency graphs of \emph {closed}
polyhedral surfaces.  In this model, $K_7$ can be realized as the
so-called Szilassi polyhedron; for an illustration,
see~\cite{wikipedia-szilassi}.  The tetrahedron and the Szilassi
polyhedron are the only two known polyhedra in which each face shares
a side with every other face~\cite{wikipedia-szilassi}.  Which other
(complete) graphs can be realized in this way remains an open problem.

\section{Convex-Polyhedral Surfaces}
\label{sec:convex}

In this section we investigate which graphs can be realized by
\emph{convex}-polyhedral surfaces.  First of all, it is always
possible to represent a subgraph or a subdivision of an adjacency
graph with slight modifications of the corresponding surface:
\emph{trimming} the polygons allows us to represent subgraphs, while
trimming and inserting chains of polygons allows subdivisions.
Consequently, we obtain the following result.

\begin{proposition}
  \label{prop:subgraphsSubdivisions}
  The set of adjacency graphs of convex-polyhedral surfaces
  in~$\mathbb{R}^3$ is closed under taking subgraphs and subdivisions.
\end{proposition}

\begin{proof}
  Obviously the set of adjacency graphs of convex-polyhedral surfaces
  is closed under vertex deletions.  It remains to show that it is
  also closed under edge deletions and edge subdivisions.  Consider a
  surface~$\Sur$ and its adjacency graph~$G$.  We define three
  operations that locally change~$\Sur$ and describe their effect
  on~$G$.

  \begin{figure}[htb]
    \begin{subfigure}[b]{0.49\textwidth}
      \centering \includegraphics[page=5]{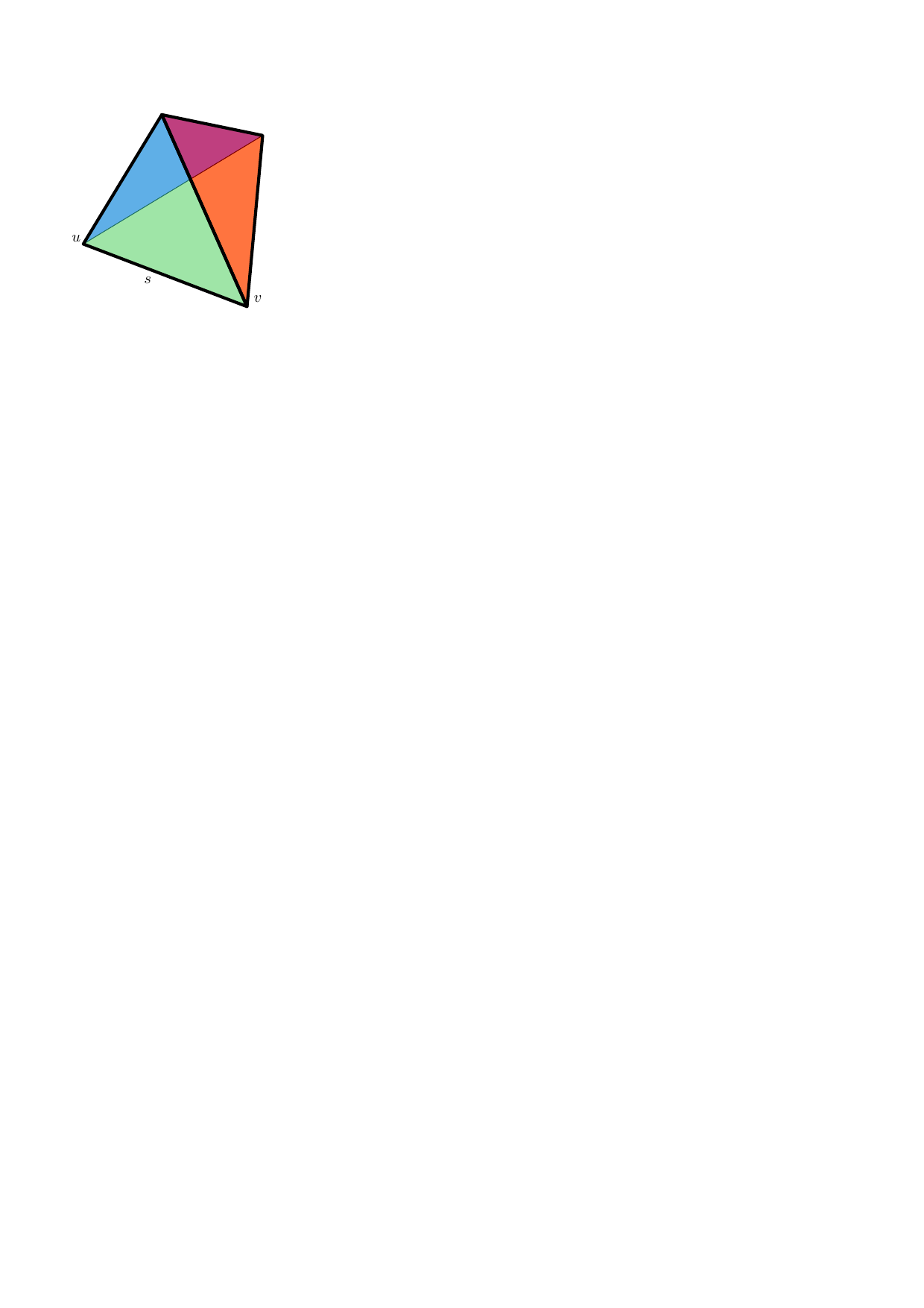}
      \caption{a surface $\mathcal S$ with $\Gra(\Sur) \simeq K_4$.}
      \label{subfig:intial}
    \end{subfigure}
    \hfill
    \begin{subfigure}[b]{0.49\textwidth}
      \centering \includegraphics[page=6]{operations}
      \caption{trimming the corners $u$ and $v$ of $\mathcal S$.}
      \label{subfig:corner_trim}
    \end{subfigure}
    
    \begin{subfigure}[b]{0.49\textwidth}
      \centering \includegraphics[page=7]{operations}
      \caption{trimming the side $s$ of $\mathcal S$.}
      \label{subfig:side_trim}
    \end{subfigure}
    \hfill
    \begin{subfigure}[b]{0.49\textwidth}
      \centering \includegraphics[page=8]{operations}
      \caption{subdividing the side $s$ of $\mathcal S$.}
      \label{subfig:side_subdivide}
    \end{subfigure}
    \caption{The three operations used in the proof of
      \cref{prop:subgraphsSubdivisions}.}
  \end{figure}

  A \emph{corner trim} takes any corner $v$ of $\Sur$ (which may
  belong to multiple polygons of $\Sur$) and replaces it by a set of
  new corners, one on each incident side, all at distance
  $\varepsilon$ from $v$ for some sufficiently small $\varepsilon$;
  see Figures~\ref{subfig:intial} and~\ref{subfig:corner_trim}.  Each
  polygon in $\Sur$ incident to $v$ now uses two new copies of $v$
  instead of $v$; observe that the new polygons are still strictly
  convex.  The adjacency graph of $\Sur$ does not change under a
  corner trim operation.

  A \emph{side trim} takes any side $s$ of $\Sur$ and first performs a
  corner trim on both incident corners.  This creates two new corners
  on $s$; we delete these two new corners from the (at most two)
  polygons incident to $s$; see Figures~\ref{subfig:intial}
  and~\ref{subfig:side_trim}. Note that this operation still preserves
  strict convexity of any polygons incident to $s$, and that if there
  were two polygons that shared~$s$, they now no longer share a side.
  Thus, the edge of $G$ corresponding to $s$ is removed.

  Finally, a \emph{subdivide} operation takes any side $s$ of $\Sur$
  that is incident to two polygons $P$ and $Q$, and first performs a
  side trim on $s$.  Next, we create a new polygon $R$ as follows; for
  an illustration see \Cref{subfig:intial,subfig:side_subdivide}.
  Let~$m$ be the midpoint of $s$.  We first add~$m$ to the trimmed
  versions of both $P$ and $Q$.  Now, we create two lines $\ell_P$ and
  $\ell_Q$ parallel to~$s$ that lie on the supporting planes of $P$
  and $Q$ on the side of $s$ that contains $P$ or $Q$, respectively.
  The distance $\delta$ of these lines to $s$ is chosen sufficiently
  small; in particular, we must take $\delta < \varepsilon$.

  We place two new corners on the intersection of $\ell_P$ with the
  boundary of $P$ and two new corners on the intersection of $\ell_Q$
  with the boundary of $Q$. Note that the resulting points are
  coplanar and in convex position; our new polygon $R$ is the convex
  hull of these points.  If~$\delta$ is chosen sufficiently small, the
  polygon $R$ has a nonempty intersection with $P$ and $Q$, but no
  other polygon of $\Sur$.  Hence, the subdivide operation creates a
  new vertex in $G$ that is adjacent to the two original endpoints of
  the edge that corresponds to $s$, and no other vertices of~$G$.
\end {proof}

The existence of a flat surface with the correct adjacencies follows
from the Andreev--Koebe--Thurston circle packing theorem; we include a
direct proof.

\begin{proposition}
  \label{planar:2d}
  For every planar graph~$G$, there exists a flat convex-polyhedral
  surface \Sur such that $\Gra(\Sur) \simeq G$.  Moreover, such a
  surface can be computed in linear time.
\end{proposition}

\begin{proof}
  Let $G$ be a planar embedded graph with at least three vertices (for
  at most two vertices the statement is trivially true).  We use a
  linear time algorithm by Read~\cite{r-nmdpg-87} to find a
  biconnected augmentation of $G$ on the same vertex set.  For each
  face of the resulting graph, we now add a new vertex and connect it
  to all vertices of the face by adding further edges.  This can be
  accomplished in linear time.  The resulting graph~$G'$ is a
  triangulation.  Let~$r$ be one of the vertices of $G'\setminus G$.

  The dual~$G^\star$ of~$G'$ is a cubic 3-connected planar graph.
  Using a linear-time algorithm by B{\'a}r{\'a}ny and
  Rote~\cite{br-scdpg-DM06}, we compute a planar drawing of~$G^\star$
  in which the boundary of each face is described by a strictly convex
  polygon and where the outer face is the face dual to~$r$.  Hence,
  the drawing is a flat convex-polyhedral surface~\Sur with
  $\Gra(\Sur)\simeq G'-r$.  By \cref{prop:subgraphsSubdivisions},
  there is also a representation of~$G$.  To compute it efficiently,
  observe that due to the $3$-regularity of $G^*$, the side trim
  operation defined in the proof of \cref{prop:subgraphsSubdivisions}
  can be carried out in constant time.  Moreover, the corresponding
  graph operation preserves the $3$-regularity.  Hence, it is easy to
  remove all unwanted adjancencies in linear time.  To obtain the
  desired representation of $G$, it remains to remove the polygons
  corresponding to the vertices of $G'\setminus G$, which can also be
  done in linear time.
\end{proof}

So for planar graphs, corner and side contacts behave similarly.  For
nonplanar graphs (for which the third dimension is essential), the
situation is different.  Here, side contacts are more restrictive.

\subsection{Complete Graphs}

We introduce the following notation.  In a polyhedral surface \Sur
with adjacency graph $G$, we denote by $P_v$ the polygon in~\Sur that
represents vertex~$v$ of~$G$.

\begin{lemma}
  \label{lemma:triangle}
  Let \Sur be a convex-polyhedral surface in~$\mathbb{R}^3$ with
  adjacency graph~$G$.  If $G$ contains a triangle $uvw$,
  polygons~$P_v$ and~$P_w$ lie in the same closed halfspace with
  respect to~$P_u$.
\end{lemma}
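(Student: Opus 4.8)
The plan is to work in the plane $H_u$ that contains the strictly convex polygon $P_u$, and to interpret "halfspace w.r.t.\ $P_u$" as a closed halfspace bounded by $H_u$. I would first show that each of $P_v$ and $P_w$ lies entirely in one of the two closed halfspaces bounded by $H_u$, and then argue that it must be the \emph{same} halfspace.

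For the first step: since $uv$ is an edge of the adjacency graph, $P_u$ and $P_v$ share a complete side $e$. As $e$ is a side of $P_u$, it lies in $H_u$; and as $e$ is a side of the convex polygon $P_v$ (which spans some plane $H_v$), the line $\ell$ through $e$ is a supporting line of $P_v$ within $H_v$. If $H_v = H_u$, then $P_v \subseteq H_u$ lies in both closed halfspaces and we are done for $P_v$. Otherwise $H_u \cap H_v = \ell$, and the affine functional defining $H_u$ restricts to a \emph{nonzero} affine function on $H_v$ whose zero set is exactly $\ell$; hence the two open halfplanes of $H_v \setminus \ell$ lie in opposite open halfspaces of $H_u$ (the standard fact that two distinct planes meeting in a line cross transversally). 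Since $P_v$ lies on one closed side of its supporting line $\ell$, it is contained in $\ell$ together with one of those open halfplanes, i.e.\ in a single closed halfspace of $H_u$. The same argument applies to $P_w$ via the edge $uw$.

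For the second step, suppose for contradiction that the two halfspaces differ. After relabeling, $P_v$ lies in the closed upper halfspace $H_u^{+}$ and $P_w$ in the closed lower halfspace $H_u^{-}$; if either polygon were contained in $H_u$ it would lie in both, so we may further assume $P_v$ has a point strictly above $H_u$ and $P_w$ a point strictly below. Since $vw$ is an edge, $P_v$ and $P_w$ share a complete side $e'$, and $e' \subseteq P_v \cap P_w \subseteq H_u^{+} \cap H_u^{-} = H_u$, so $e'$ lies in $H_u$. Now $e$ and $e'$ are both sides of $P_v$ lying in $H_u$; they are distinct, because $e = e'$ would make this segment a common side of $P_u$, $P_v$, and $P_w$, contradicting the model assumption that a side is shared by at most two polygons. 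As $P_v$ has a point strictly off $H_u$, its plane $H_v$ differs from $H_u$, so $H_u \cap H_v$ is the single line $\ell$, and thus both $e$ and $e'$ lie on $\ell$. But the line supporting a side of a strictly convex polygon meets the polygon in exactly that side, so $\ell$ cannot contain two distinct sides of $P_v$ -- a contradiction. Hence $P_v$ and $P_w$ lie in the same closed halfspace with respect to $P_u$.

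I expect the second step to be the crux: the delicate points are observing that the shared side $e'$ of $P_v$ and $P_w$ is forced into the plane $H_u$, that $e'$ is genuinely distinct from $e$ (using the at-most-two-polygons-per-side rule), and that two distinct sides of a strictly convex polygon cannot be collinear. The first step is essentially the elementary fact that a convex polygon lies on one side of the line through any of its sides, transported from the plane $H_v$ to a statement about $H_u$; the only care needed there is the transversality of two distinct planes meeting in a line.
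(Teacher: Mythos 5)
Your proof is correct and follows essentially the same route as the paper: each of $P_v$, $P_w$ lies in a closed halfspace of the supporting plane of $P_u$ because it shares a side with $P_u$ and is convex, and the shared side $vw$ forces both into the same halfspace. Your second step is in fact more careful than the paper's one-line justification, since you explicitly rule out the degenerate configuration in which $P_v$ and $P_w$ straddle the plane and share a side lying inside it, using the at-most-two-polygons-per-side rule and the fact that a strictly convex polygon cannot have two distinct collinear sides.
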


\begin{proof}
  Due to their convexity, each of $P_v$ and~$P_w$ lie entirely in one
  of the closed halfspaces with respect to the supporting plane
  of~$P_u$.  Moreover, one of the halfspaces contains both~$P_v$
  and~$P_w$; otherwise they cannot share a side and the edge $vw$
  would not be represented. (Recall that each side can be shared by at
  most two polygons. Thus, the side corresponding to the edge $uv$
  cannot simultaneously represent an adjacency with $w$.)
\end{proof}

A graph $H$ is \emph{subisomorphic} to a graph~$G$ if $G$ contains a
subgraph $G'$ with $H \simeq G'$.
Thomassen~\cite[page~98]{thomassenColorCriticalGraphs} has observed
the following.

\begin{proposition}[\cite{thomassenColorCriticalGraphs}]
  \label{obs:K5}
  There exists no convex-polyhedral surface~\Sur in~$\mathbb{R}^3$
  such that $K_5$ is subisomorphic to $\Gra(\Sur)$.
\end{proposition}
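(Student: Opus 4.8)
The plan is a proof by contradiction. Suppose some convex-polyhedral surface \Sur satisfies that $K_5$ is subisomorphic to $\Gra(\Sur)$; then \Sur contains five polygons $P_1,\dots,P_5$ that pairwise share a side. The heart of the argument is to show that these five convex polygons — even though they live in $\mathbb{R}^3$ — must all lie on the boundary of a single convex body, so that their side-adjacency graph is planar, contradicting the non-planarity of $K_5$.

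I would first set aside the degenerate case in which $P_1,\dots,P_5$ all share a common supporting plane: any family of interior-disjoint convex polygons in a common plane has a planar side-adjacency graph (place a vertex inside each polygon and route each edge through the midpoint of the corresponding shared side, staying inside the two incident polygons), so a $K_5$ subgraph is impossible there. Thus I may assume the five polygons are not all coplanar. For each $i$, let $h_i$ be the plane containing $P_i$, with its two closed halfspaces. Applying \Cref{lemma:triangle} to all triangles $ijk$ with $j,k\in\{1,\dots,5\}\setminus\{i\}$ shows that every two of the polygons $P_j$ with $j\ne i$ lie in a common closed halfspace of $h_i$; since the ones not coplanar with $P_i$ are each strictly on one side and hence must all lie in the \emph{same} closed halfspace of $h_i$ (while the ones coplanar with $P_i$ lie in $h_i$ itself), there is one closed halfspace $H_i$ bounded by $h_i$ that contains all of $P_1,\dots,P_5$. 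This $H_i$ is well-defined precisely because not all polygons are coplanar.

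It remains to package this. Let $C=\bigcap_{i=1}^5 H_i$, and intersect $C$ with a large cube $Q$ that contains $P_1\cup\dots\cup P_5$ in its interior to obtain a bounded convex polytope $C'$; since $C'$ contains two non-coplanar polygons, it is full-dimensional. Each $P_i$ lies in $C'$, and since $h_i$ supports $C'$ while $h_i\cap C'$ contains the two-dimensional polygon $P_i$, the face $F_i:=h_i\cap C'$ is a facet of $C'$ with $P_i\subseteq F_i$. Hence $P_1,\dots,P_5$ are interior-disjoint convex polygons lying on the boundary sphere $\partial C'\cong S^2$, and the same vertex-in-each-polygon, edge-through-each-midpoint construction as above yields a planar drawing of their side-adjacency graph on $S^2$ — but this graph contains $K_5$, a contradiction. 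I expect the substantive step to be the use of \Cref{lemma:triangle} to pin all five polygons into a common convex region; the main nuisance is the coplanarity degeneracies, both when defining the halfspaces $H_i$ and in the final step, which is why I argue the last step directly by a drawing on $\partial C'$ rather than through a facet-adjacency graph of $C'$ (in which coplanar polygons would collapse to the same vertex).
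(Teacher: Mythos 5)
Your proof is correct and takes essentially the same route as the paper: applying \Cref{lemma:triangle} to all triples pins the five polygons onto the boundary of a common (weakly) convex body, whose side-adjacency structure is planar, contradicting the nonplanarity of $K_5$. You merely make explicit what the paper leaves implicit (the coplanarity degeneracies and the planar drawing on $\partial C'$), and you handle the subisomorphism directly with the five polygons instead of invoking \Cref{prop:subgraphsSubdivisions}.
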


For completeness, we now prove Thomassen's observation.

\begin{proof}
  Suppose that there is a convex-polyhedral surface \Sur with
  $\Gra(\Sur) \simeq K_5$.  By \Cref{lemma:triangle} and the fact that
  all vertex triples form a triangle, the surface $\Sur$ lies in one
  closed halfspace of the supporting plane of every polygon~$P$ of
  \Sur.  In other words, \Sur is a subcomplex of a (weakly) convex
  polyhedron, whose adjacency graph must be planar. This yields a
  contradiction to the nonplanarity of $K_5$.  Together with
  \cref{prop:subgraphsSubdivisions} this implies the claim.
\end{proof}

Evans et al.~\cite{erssw-rghtp-GD19} showed that every bipartite graph
has a contact representation by touching polygons on a polynomial-size
integer grid in $\mathbb{R}^3$ for the case of corner contacts.  As we
have seen before, side contacts are less flexible.  In particular, in
\Cref{thm:k581} we show that $K_{5,81}$ cannot be represented.  On the
positive side, we show in the following that every (bipartite) graph
that comes from subdividing each edge of an arbitrary graph (at least)
once can be realized. In our construction, we place the polygons in a
cylindrical fashion, which is reminiscent of the realizations created
by Evans et al.  However, due to the more restrictive nature of side
contacts, the details of the two approaches are necessarily quite
different.

\begin{theorem}\label{obs:Knsub}
  Let $G$ be any graph, and let $G'$ be the subdivision of~$G$ in
  which every edge is subdivided with at least one vertex.  Then there
  exists a convex-polyhedral surface \Sur in~$\mathbb{R}^3$ such that
  $\Gra(\Sur) \simeq G'$.
\end{theorem}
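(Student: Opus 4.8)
The plan is to build \Sur explicitly by placing all polygons on (or near) a common cylinder, so that the subdivision vertices do the work of ``connecting'' the original vertices. First I would fix a long vertical cylinder in~$\mathbb R^3$ and, for each original vertex~$v$ of~$G$, reserve a thin horizontal slab; inside the slab, the polygon~$P_v$ is a large convex polygon whose boundary wraps most of the way around the cylinder (think of a regular $m$-gon inscribed in a horizontal circle, with $m$ large enough to give each incident subdivision polygon its own side). The key point is that distinct $P_v$'s live in disjoint horizontal slabs, hence are automatically interior-disjoint, and none of them touches another directly — which is exactly what we want, since in~$G'$ no two original vertices are adjacent.

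Next I would realize each subdivision vertex~$x_e$, where $e=uv$, by a polygon~$P_{x_e}$ that ``bridges'' the slab of~$u$ and the slab of~$v$: a thin, nearly-vertical convex polygon (a long thin trapezoid or rectangle) that runs along the outside of the cylinder from the horizontal level of~$P_u$ up to the horizontal level of~$P_v$, sharing one short side with a side of~$P_u$ and the opposite short side with a side of~$P_v$, and touching nothing else. Since each original vertex~$v$ has $\deg_G(v)$ incident subdivision vertices, I make the circular $m$-gon~$P_v$ have at least $\deg_G(v)$ sides available and assign to each incident edge~$e$ a distinct side of~$P_v$ and a distinct angular ``sector'' of the cylinder, so that the bridging polygons for different edges occupy disjoint angular wedges and therefore cannot intersect one another. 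The vertical extents of two bridging polygons can overlap, but placing them at different angular positions around the cylinder keeps them disjoint; one has to check that a bridge for edge $uv$ does not accidentally cross the slab of some third vertex~$w$ whose level lies between those of~$u$ and~$v$ — this is handled by routing the bridge strictly outside the cylinder (radius slightly larger than that of the $m$-gons) in its own wedge, so it only meets the two polygons it is supposed to meet, each along a full side.

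Then I would verify the three defining conditions of a polyhedral surface: interior-disjointness (immediate from the slab/wedge separation), that every pairwise intersection is a single corner or a complete shared side (true by construction — original polygons meet nothing, each bridge meets exactly~$P_u$ and~$P_v$ each in a full side), and that no side is shared by three polygons (each side of a bridge is shared with exactly one original polygon, and each side of~$P_v$ is used by at most one bridge). Convexity is clear since every polygon is a convex $m$-gon or a convex quadrilateral. Finally $\Gra(\Sur)\simeq G'$: the edges are precisely the pairs $\{v,x_e\}$ with $v$ an endpoint of~$e$, which is exactly~$E(G')$. Since $G'$ is a fixed subdivision obtained from $G$ by subdividing each edge exactly once, and by \cref{prop:subgraphsSubdivisions} further subdivisions of~$G'$ are also realizable, this gives the theorem for ``at least one'' subdivision vertex per edge.

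The main obstacle I anticipate is the simultaneous bookkeeping of the two separating parameters — the \emph{vertical} levels that keep the original polygons apart, and the \emph{angular} wedges that keep the bridges apart and keep each bridge from slicing through an intermediate slab — together with making sure a bridge still meets each of its two original polygons along an \emph{entire} side rather than a partial overlap or a mere corner. Concretely, one must choose the $m$-gons' radius, the bridges' (slightly larger) radius, the wedge widths ($2\pi/|E(G)|$ suffices), and the short-side lengths consistently; the geometry is elementary but the argument that no unintended contact occurs needs to be written carefully. I do not expect any genuine difficulty beyond this careful case analysis.
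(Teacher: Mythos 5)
There is a genuine gap at the one place you yourself flag as the ``main obstacle'': keeping a bridge polygon away from the vertex polygons whose slabs lie between those of its two endpoints. Your horizontal $m$-gons wrap all the way around the cylinder, so they occupy \emph{every} angular wedge; angular separation therefore does nothing to protect a bridge from an intermediate $P_w$. Concretely, if the two sides that the bridge shares with $P_u$ and $P_v$ are chords of the (aligned, congruent) $m$-gons in the same sector, then the bridge --- being a planar convex polygon containing both chords as sides --- contains their entire convex hull, i.e.\ the vertical rectangle between them; at the height of every intermediate $P_w$ this rectangle contains a full chord at distance at most the circumradius, which meets (indeed, in the aligned case equals a side of) $P_w$, producing forbidden intersections. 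Your proposed fix, ``route the bridge strictly outside the cylinder at a slightly larger radius,'' is not available: the two shared sides are chords lying on or inside the cylinder, and a planar convex polygon having them as sides cannot simultaneously lie strictly outside it --- convexity forces it back through exactly the region the intermediate polygons occupy. (A further unstated requirement is that the two shared sides must be parallel and coplanar for the bridge to be a planar polygon at all; with your same-sector assignment this holds, but it also pins the bridge to the offending vertical plane.) So the construction as written already fails for a triangle whose three vertices sit at three different heights.

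The gap is repairable, but only by adding the idea that is the actual heart of the paper's proof: a convexity condition that pushes each edge polygon strictly past all non-incident vertex polygons. In your setup this would mean, e.g., choosing the circumradii of the $P_v$ as a strictly convex function of their heights, so that the planar trapezoid joining a side of $P_u$ to a (parallel) side of $P_v$ stays at strictly larger distance from the axis than every intermediate $P_w$ at its height. The paper realizes the same mechanism with the roles swapped: the vertex polygons are congruent copies of one strictly convex polygon placed \emph{vertically} in distinct half-planes through the axis (separated angularly), and each subdivision vertex is the trapezoid spanned by corresponding short sides of two copies; the convex chain of the prototype guarantees that this trapezoid lies strictly closer to the axis than every other polygon at its height, while distinct trapezoids live in disjoint horizontal slabs. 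Without such a radial/convexity argument, your two separating parameters (heights for vertex polygons, wedges for bridges) are not enough, and the disjointness claim in your third paragraph does not hold.
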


\begin{proof}
  Let $V(G)=\{v_1,\dots,v_n\}$, let $E(G)=\{e_1,\dots,e_m\}$, and let
  $P$ be a strictly convex polygon with corners $p_1,\dots,p_{2m}$ in
  the plane.  We assume that $m \ge 2$, that $p_1$ and $p_{2m}$ lie on
  the x-axis, and that the rest of the polygon is a convex chain that
  projects vertically onto the line segment~$\overline{p_1p_{2m}}$,
  which we call the \emph{long side} of~$P$.  We call the other sides
  \emph{short sides}.  We choose~$P$ such that no short side is
  parallel to the long side.

  Let $Z$ be a (say, unit-radius) cylinder centered at the z-axis.
  For each vertex $v_i$ of~$G$, we take a copy~$P^i$ of~$P$ and place
  it vertically in~$\mathbb R^3$ such that its long side lies on the
  boundary of~$Z$; see \Cref{fig:1a}.  Each polygon~$P^i$ lies
  inside~$Z$ on a distinct halfplane that is bounded by the z-axis.
  Finally, all polygons are positioned at the same height, implying
  that for any $j \in \{1,\dots,2m\}$, all copies of~$p_j$ lie on the
  same horizontal plane~$h_j$ and have the same distance to the
  z-axis.

  \begin{figure}[htb]
    \begin{subfigure}[b]{0.48\textwidth}
      \centering \includegraphics[page=2]{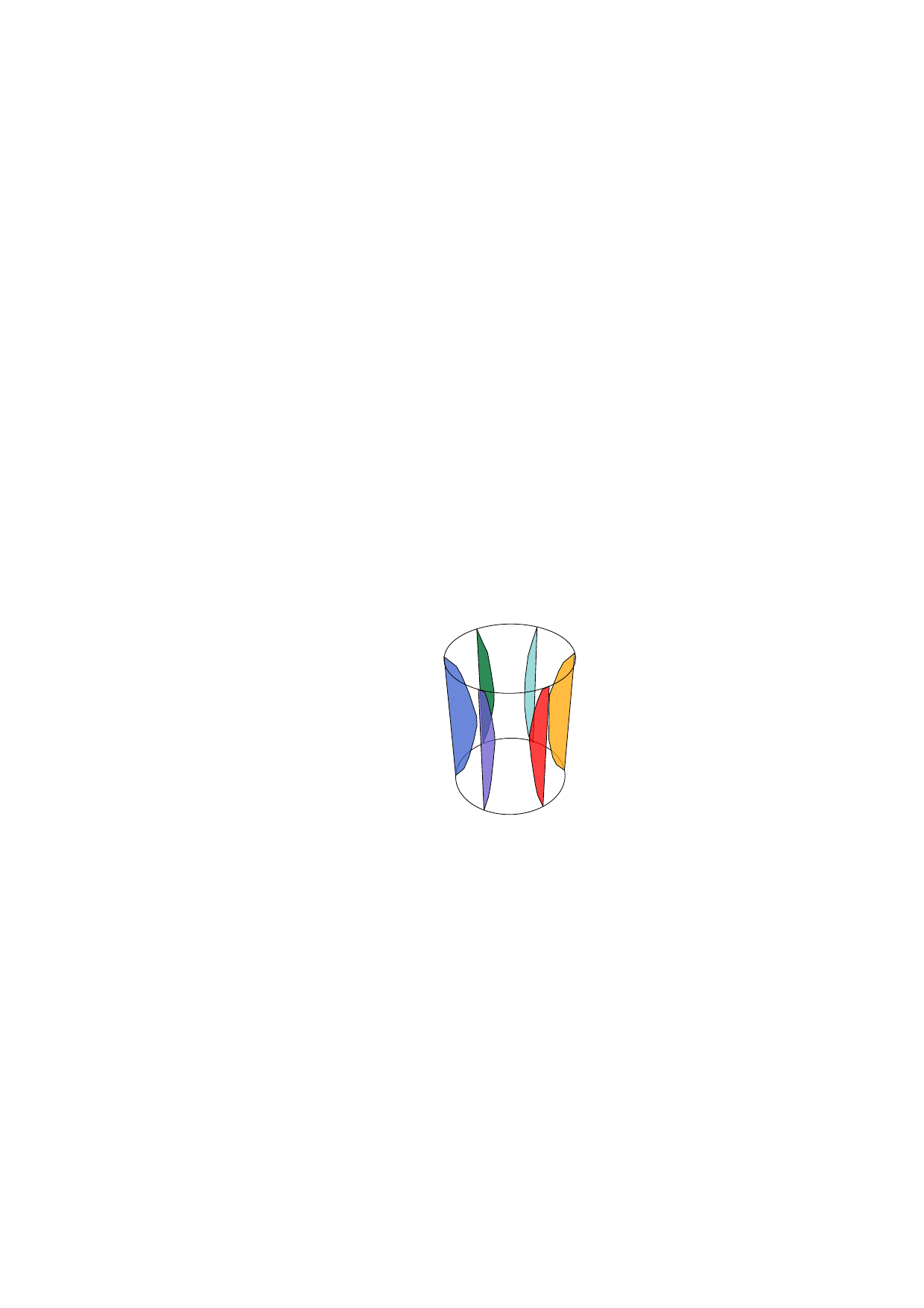}
      \caption{polygons $P^1,\ldots,P^n$}
      \label{fig:1a}
    \end{subfigure}
    \hfill
    \begin{subfigure}[b]{0.48\textwidth}
      \centering \includegraphics{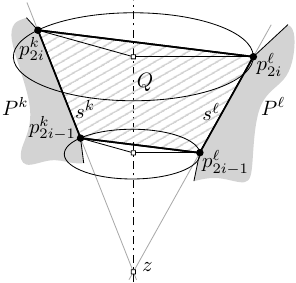}
      \caption{quadrilateral $Q$ spanned by $s^k$ and~$s^\ell$}
      \label{fig:1b}
    \end{subfigure}
    \caption{Illustration for the proof of \Cref{obs:Knsub}.}
    \label{fig:knsub}
  \end{figure}

  Let $i \in \{1,\dots,m\}$.  Then the side $s=p_{2i-1}p_{2i}$ is a
  short side of~$P$.  For $k=1,2,\dots,n$, we denote by~$s^k$ and
  $p_i^k$ the copies of~$s$ and $p_i$ in~$P^k$, respectively.  We
  claim that, for $1 \le k<\ell \le n$, the sides $s^k$ and $s^\ell$
  span a convex quadrilateral that does not intersect any $P^j$ with
  $j \not\in \{k,\ell\}$.  To prove the claim, we argue as follows;
  see \Cref{fig:1b}.
	
  By the placement of $P^k$ and $P^\ell$ inside $Z$, the supporting
  lines of~$s^k$ and~$s^\ell$ intersect at a point $z$ on the z-axis,
  implying that $s^k$ and $s^\ell$ are coplanar.  Moreover,
  $p_{2i-1}^k$ and $p_{2i-1}^\ell$ are at the same distance from $z$,
  and the same holds for $p_{2i}^k$ and $p_{2i}^\ell$.  Hence the
  triangle spanned by $z$, $p_{2i-1}^k$, and $p_{2i-1}^\ell$ is
  similar to the triangle spanned by $z$, $p_{2i}^k$, and
  $p_{2i}^\ell$, implying that $p_{2i-1}^kp_{2i-1}^\ell$ and
  $p_{2i}^kp_{2i}^\ell$ are parallel and hence span a convex
  quadrilateral~$Q$ (actually a trapezoid).  Finally, no polygon $P^j$
  with $j \not\in \{k,\ell\}$ can intersect $Q$ as any point in the
  interior of $Q$ lies closer to the z-axis than any point of $P^j$ at
  the same z-coordinate, which proves the claim.

  We use $Q$ as the polygon for the subdivision vertex of the
  edge~$e_i$ of $G$ (in case~$e_i$ was subdivided multiple times, we
  dissect~$Q$ accordingly).  Let $v_a$ and $v_b$ be the endpoints
  of~$e_i$.  By our claim, $Q$ does not intersect any $P^j$ with
  $j \not\in \{a,b\}$.  The quadrilateral $Q$ lies in the region
  of~$Z$ that is bounded by the horizontal planes~$h_{2i-1}$ and
  $h_{2i}$.  Since any two such regions are vertically separated and
  hence disjoint, the $m$ quadrilaterals together with the $n$ copies
  of $P$ constitute a valid representation of~$G'$.
\end{proof}

The combination of \Cref{obs:K5} and \Cref{obs:Knsub} rules out any
Kuratowski-type characterization for adjacency graphs of
convex-polyhedral surfaces.  This graph class contains a subdivision
of~$K_5$, but it does not contain~$K_5$; hence it is not minor-closed.
We remark that the subdivided $K_n$ has $\binom{n}{2}+n$ vertices and
crossing number $\Theta(n^4)$, so it is an adjacency graph of a
convex-polyhedral surface whose crossing number is quadratic in its
number of vertices (and edges).

\subsection{Complete Bipartite Graphs}

\begin{proposition}\label{prop:K44}
  There exists a convex-polyhedral surface~\Sur such that
  $\Gra(\Sur) \simeq K_{4,4}$.
\end{proposition}

\begin{proof}
  We describe how to obtain such a surface~\Sur.  We start with a
  rectangular box in $\mathbb{R}^3$ and stab it with two rectangles
  that intersect each other in the center of the box as indicated in
  \Cref{fig:K44} (a).

  \begin{figure}[tbh]
    \begin{subfigure}[t]{0.48\textwidth}
      \centering \includegraphics[height=6cm]{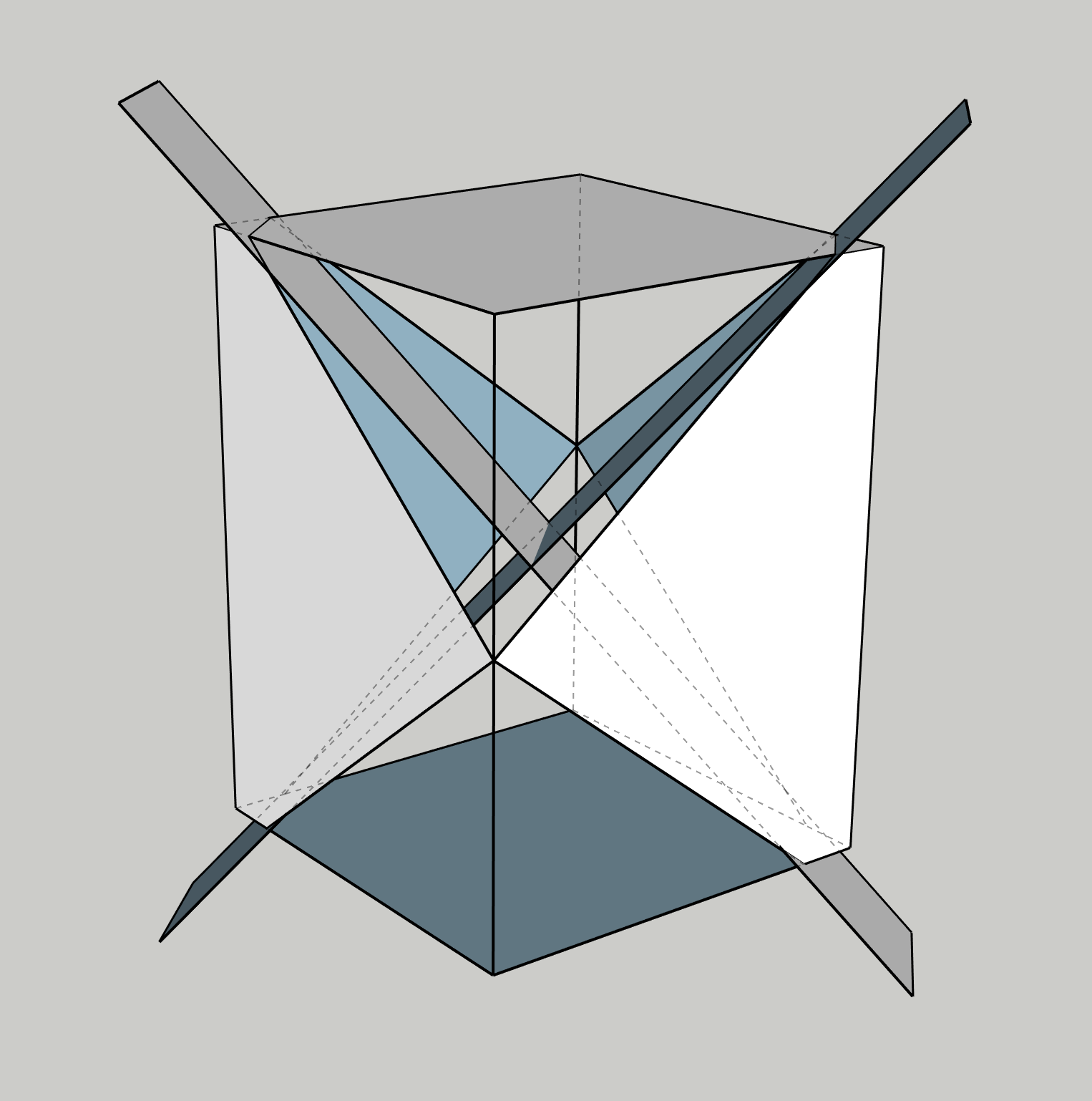}
      \caption{}
      \label{fig:K44A}
    \end{subfigure}
    \hfill
    \begin{subfigure}[t]{0.48\textwidth}
      \centering \includegraphics[height=6cm]{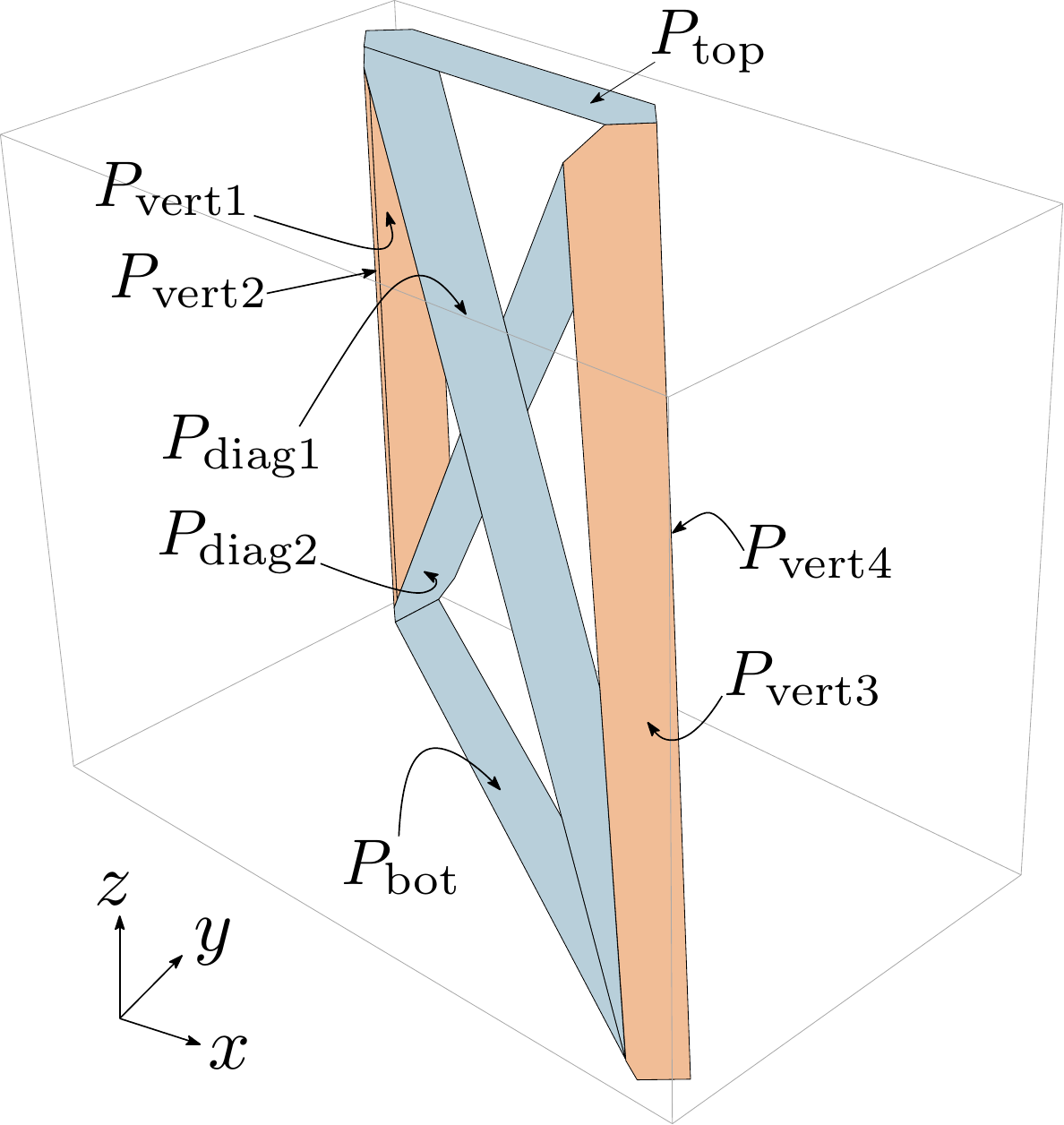}
      \caption{}
      \label{fig:K44C}
    \end{subfigure}
	
    \caption{Construction of a convex-polyhedral surface~\Sur with
      $\Gra(\Sur) \simeq K_{4,4}$.  Subfigure (a) depicts the
      rectangular box and the two slanted rectangles that form the
      basis of the construction. The figure illustrates the situation
      before the intersection of the two slanted rectangles is removed
      by shifting a corner.  Subfigure (b) illustrates the final
      realization.  The 2-coloring of the polygons reflects the
      bipartition of~$K_{4,4}$.  Note that in the depicted projection
      the polygons $P_{\mathrm{vert2}}$ and $P_{\mathrm{vert4}}$ are
      shown as line-segments / very thin polygons; for a better view
      of these polygons, refer to \cref{fig:k44additional}.  The
      orientation of the coordinate system is indicated in the
      bottom-left of the figure.}
    \label{fig:K44}
  \end{figure}

  We can now draw polygons on these eight rectangles such that each of
  the four vertical rectangles (representing the four vertices of one
  class of the bipartition of~$K_{4,4}$) contains a polygon that has a
  side contact with a polygon on each of the four horizontal or
  slanted rectangles (representing the other class of the bipartition
  of $K_{4,4}$).  To remove the intersection of the (polygons drawn on
  the) two slanted rectangles, we shift one corner of the original
  box; see \Cref{fig:K44}~(b) and \cref{fig:k44additional}.  We refer
  to the two horizontal, the four vertical, and the two slanted
  polygons as $P_\text{top}$, $P_\text{bot}$, $P_\text{vert1}$,
  $P_\text{vert2}$, $P_\text{vert3}$, $P_\text{vert4}$,
  $P_\text{diag1}$, and $P_\text{diag2}$, respectively, and list the
  coordinates of their corners in Table~\ref{tab:coork44}.

  \begin{table}[ht]
    \centering
    \caption{Coordinates for the polyhedral complex realizing
      $K_{4,4}$ as adjacency graph.}
    \label{tab:coork44}

  \begin{tabular}{lp{11cm}}
    \toprule
    polygon & vertices \\
    \midrule
    $P_\text{top}$ & $(-6,0,10),(-5,-1,10),(5,-1,10),(6,0,10),$ $(5,1,10),(-5,1,10)$\\
    $P_\text{bot}$ & $(-5,1,-10),(-6,0,-10),(-5,-1,-10),(17,-13,-10),$ $(18,-12,-10),(17,-11,-10)$\\
    $P_\text{vert1}$ & $(-6,0,10),(-5,1,10),\left(-\frac{23}{5},\frac{7}{5},\frac{106}{11}\right),\left(-\frac{23}{5},\frac{7}{5},-\frac{46}{5}\right),$ $(-5,1,-10),(-6,0,-10)$\\
    $P_\text{vert2}$ &	$(-6,0,10),(-5,-1,10),\left(-\frac{23}{5},-\frac{7}{5},\frac{106}{11}\right),\left(-\frac{23}{5},-\frac{7}{5},-\frac{46}{5}\right),$ $(-5,-1,-10),(-6,0,-10)$\\
    $P_\text{vert3}$ &	$(6,0,10),(5,-1,10),\left(\frac{23}{5},-\frac{59}{25},\frac{46}{5}\right),\left(\frac{82}{5},-\frac{712}{55},-\frac{104}{11}\right),$ $(17,-13,-10),(18,-12,-10)$\\
    $P_\text{vert4}$ & $(6,0,10),(5,1,10),\left(\frac{23}{5},\frac{7}{5},\frac{46}{5}\right),\left(\frac{82}{5},-\frac{52}{5},-\frac{104}{11}\right),$ $(17,-11,-10),(18,-12,-10)$\\
    $P_\text{diag1}$& $(-5,1,10),\left(-\frac{23}{5},\frac{7}{5},\frac{106}{11}\right),\left(\frac{82}{5},-\frac{52}{5},-\frac{104}{11}\right),(17,-11,-10),$ $(17,-13,-10),\left(\frac{82}{5},-\frac{712}{55},-\frac{104}{11}\right),\left(-\frac{23}{5},-\frac{7}{5},\frac{106}{11}\right),(-5,-1,10)$\\
    $P_\text{diag2}$&
                      $(-5,1,-10),\left(-\frac{23}{5},\frac{7}{5},-\frac{46}{5}\right),\left(\frac{23}{5},\frac{7}{5},\frac{46}{5}\right),(5,1,10),(5,-1,10),$ $\left(\frac{23}{5},-\frac{59}{25},\frac{46}{5}\right),\left(-\frac{23}{5},-\frac{7}{5},-\frac{46}{5}\right),(-5,-1,-10)$ \\
    \bottomrule
  \end{tabular}
\end{table}
 
With the specified coordinates, $P_\text{diag1}$ and $P_\text{diag2}$
each have a side that lies in the interior of~$P_\text{top}$ and a
side that lies in the interior of~$P_\text{bot}$.  To fix this, one
needs to clip the two polygons such that they lie in the interior of
the original cuboid.  This can be done by intersecting them with the
slab $-9.9 \le z \le 9.9$.  Moreover, the two polygons
$P_\text{vert1}$ and $P_\text{vert2}$ (as well as $P_\text{vert3}$ and
$P_\text{vert4}$) have a common side, even though they correspond to
vertices in the same class of the bipartition of $K_{4,4}$.  These
unwanted contacts can also be removed by slightly clipping the
polygons, cf.~\cref{prop:subgraphsSubdivisions}.
\end{proof}

\begin{figure}[htb]
  \begin{subfigure}[t]{0.38\textwidth}
    \centering \includegraphics[scale=.28]{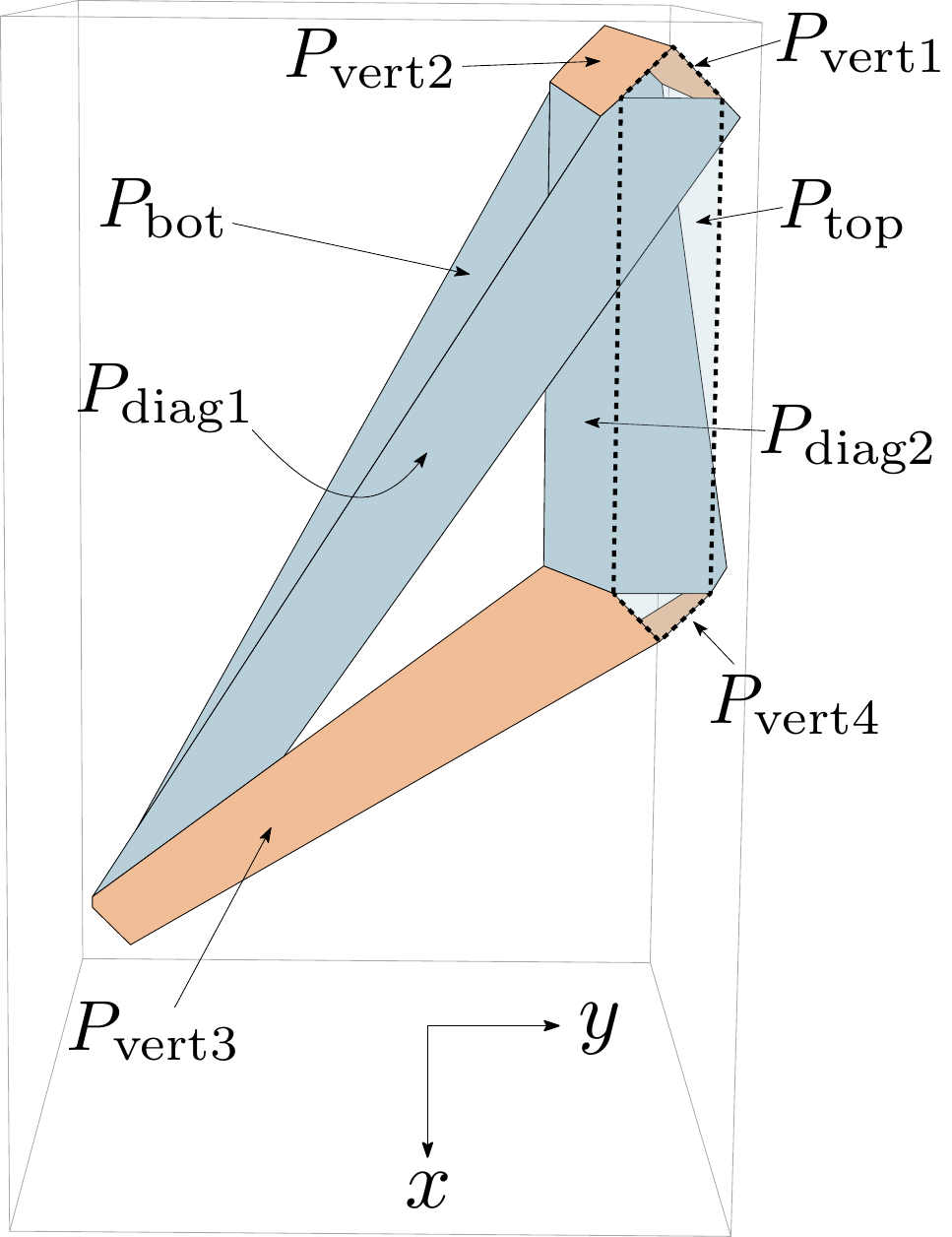}
    \caption{view from above; the polygon~$P_\text{top}$ is
      transparent and dashed}
    \label{fig:topwithoutlid}
  \end{subfigure}
  \hfill
  \begin{subfigure}[t]{0.41\textwidth}
    \centering \includegraphics[scale=.4]{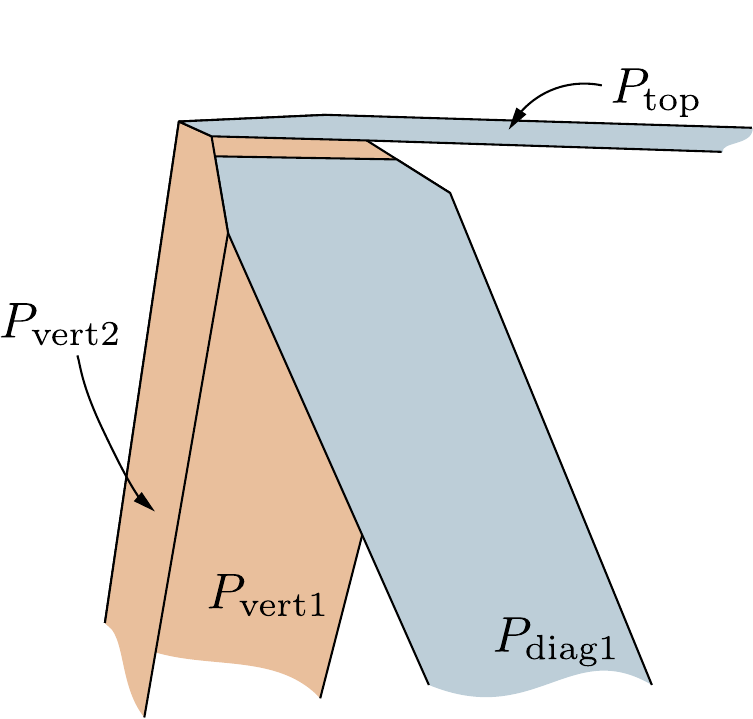}
    \caption{close-up after clipping; the separation
      between~$P_\text{top}$ and~$P_\text{diag1}$ is exaggerated}
    \label{fig:withoutdiags}
  \end{subfigure}

  \caption{Additional views of the realization of $K_{4,4}$.}
  \label{fig:k44additional}
\end{figure}

\begin{proposition}
  \label{prop:K35}
  There exists a convex-polyhedral surface~\Sur such that
  $\Gra(\Sur) \simeq K_{3,5}$.
\end{proposition}

\begin{proof}
  We call the vertices of the smaller bipartition class the gray
  vertices, and their polygons gray polygons. For the other class we
  pick a distinct color for every vertex and use the same
  naming-by-color convention. We start our construction with a
  triangular prism in which the quadrilateral faces $q_1,q_2,q_3$ are
  rectangles of the same size. Each of the faces $q_i$ will contain
  one gray polygon. All colorful polygons lie inside the prism.  We
  call the lines resulting from the intersection of the supporting
  planes with the prism the \emph{colorful supporting lines}.
  Unfolding the faces~$q_1$, $q_2$, and~$q_3$ into the plane yields
  \Cref{fig:k35unfolded}, which shows the gray polygons and the
  colorful supporting lines. Note that the vertices of the gray
  polygons in the figure are actually very small edges that have the
  slope of the colorful supporting line on which they are placed.  The
  colorful polygons are now already determined.

  \begin{figure}[p]
    \begin{minipage}{\textwidth}
      \centering %
      \includegraphics[page=2]{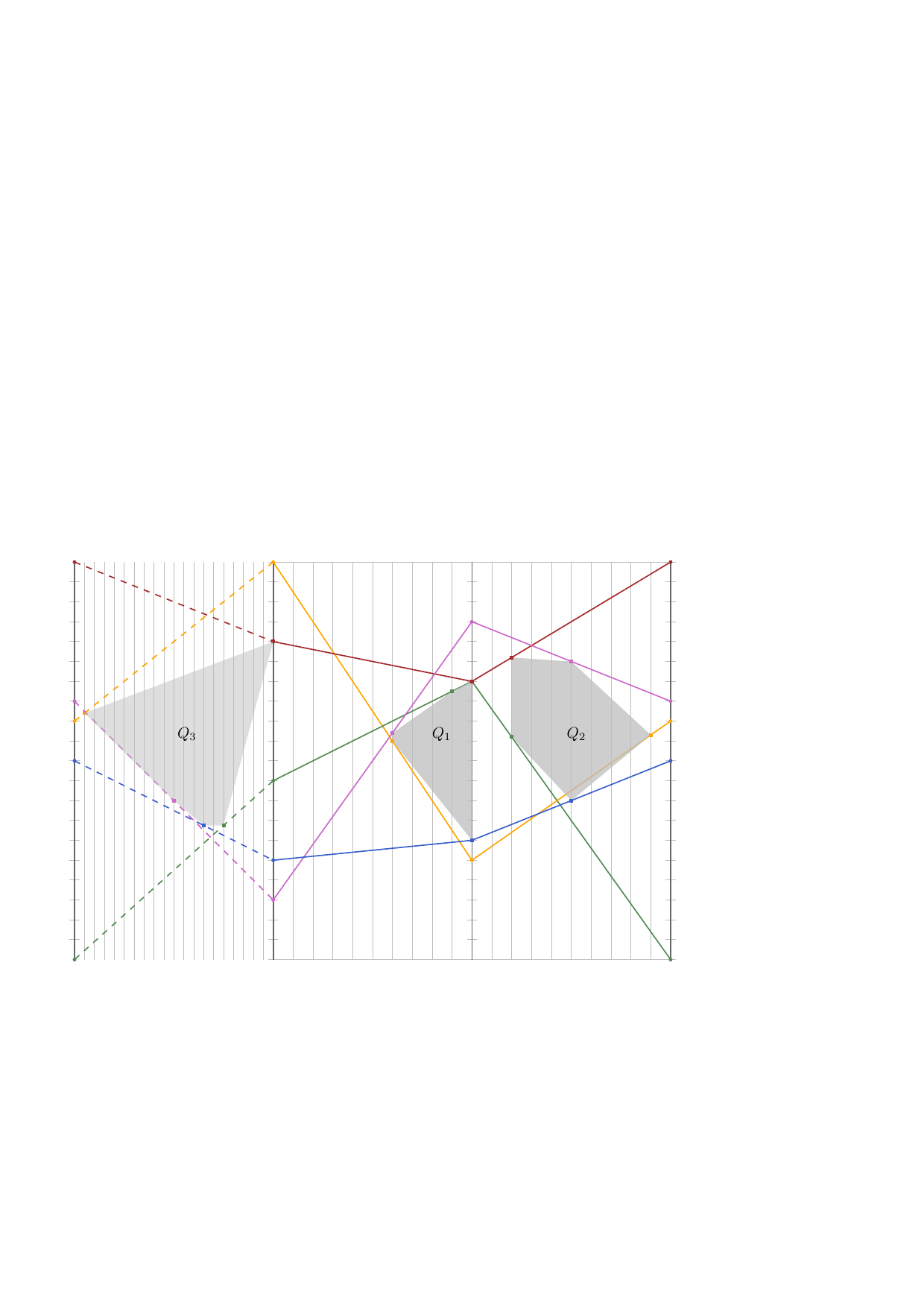}
      \caption{Constructing a convex-polyhedral surface whose
        adjacency graph is isomorphic to~$K_{3,5}$.  The prism
        is unfolded into the plane.  All polygon vertices are
        contained in the vertical grid lines.}
      \label{fig:k35unfolded}
    \end{minipage}

    \bigskip

    \begin{minipage}{\textwidth}
      \centering %
      \includegraphics[page=6]{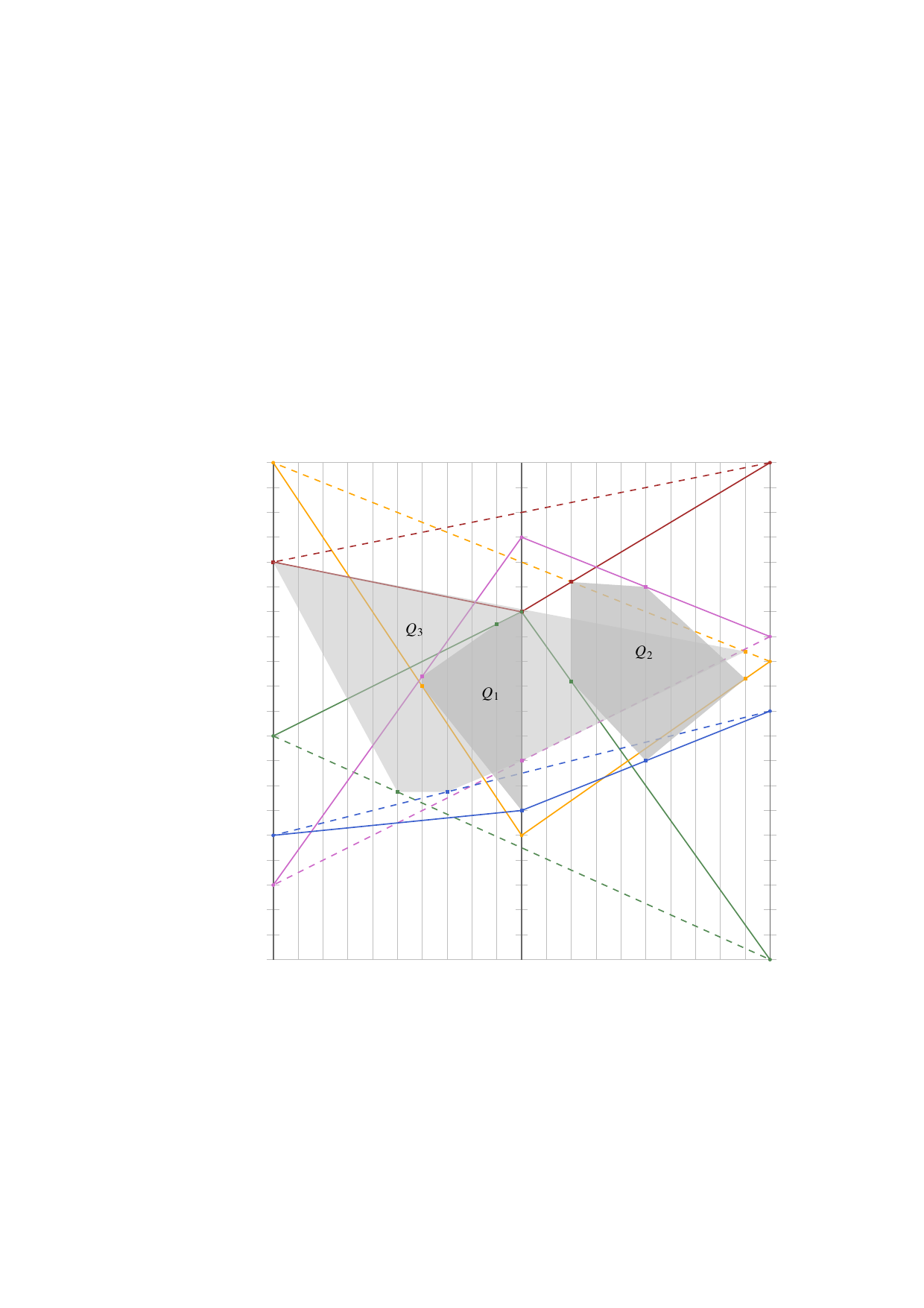}
      \caption{Front view of the prism containing a realization
        of~$K_{3,5}$.  Lines on the back face are dashed.  The thin
        black lines are the lines of intersection among the supporting
        planes of the red, yellow, and green polygons.  The thick
        black line segments indicate which parts of the intersection
        lines are contained in a colorful polygon.  Since the segments
        are disjoint, the colorful polygons are disjoint, too.}
      \label{fig:k35front}
    \end{minipage}
  \end{figure}
  
  It remains to check that the colorful polygons are disjoint.
  \Cref{fig:k35front} shows the prism in a view from the side where we
  dashed all objects on the hidden prism face.  The cyan polygon~$P_0$
  and the blue polygon~$P_4$ avoid all other colorful polygons in this
  projection and thus they avoid all other polygons in~$\mathbb{R}^3$,
  too.

  For the red polygon~$P_1$, the orange polygon~$P_2$ and the green
  polygon~$P_3$, we proceed as follows to prove disjointness.  Pick
  two of the polygons and name them~$P_i$ and~$P_j$.  The
  line~$\ell_{ij}$ of intersection of the supporting planes of~$P_i$
  and~$P_j$ is determined by the two intersections of the
  corresponding colorful supporting lines.  If the polygons intersect,
  they have to intersect on this line.  Polygon~$P_i$
  intersects~$\ell_{ij}$ in a segment $s_{ij}$; polygon~$P_j$
  intersects~$\ell_{ij}$ in~$s_{ji}$.  \cref{fig:k35front} shows,
  however, that~$s_{ij}$ and~$s_{ji}$ do not overlap in any of the
  three cases.  (Note that~$P_2$ does not intersect~$\ell_{12}$ and
  hence $P_2$ does not intersect~$P_1$ either.)  We remark that our
  construction can be verified easily since it is grid-based in the
  following sense.  First, note that each intersection point of the
  supporting plane of a colorful polygon and one of the three vertical
  edges of the prism has integral height in $\{0,\dots,20\}$; see the
  tics in \Cref{fig:k35unfolded}.  Second, on each of the three
  vertical faces of the prism, we define a set of equally-spaced
  vertical lines (10 on the two front faces, 20 on the back face) such
  that each polygon vertex lies on the intersection of one of these
  lines and its supporting plane.
\end{proof}

In contrast to \Cref{prop:K44,prop:K35}, we can show that not every
complete bipartite graph can be realized as a convex-polyhedral
surface in~$\mathbb{R}^3$.

\begin{theorem}\label{thm:k581}
  There exists no convex-polyhedral surface~\Sur in~$\mathbb{R}^3$
  such that $ K_{5,81}$ is subisomorphic to $\Gra(\Sur)$.
\end {theorem}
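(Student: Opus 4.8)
The plan is to argue by contradiction. Suppose $\Sur$ is a convex-polyhedral surface in $\mathbb{R}^3$ such that $K_{5,81}$ is subisomorphic to $\Gra(\Sur)$, and let $A_1,\dots,A_5$ and $B_1,\dots,B_{81}$ be polygons of $\Sur$ realizing the two sides of the bipartition, so that $A_i$ and $B_j$ share a full side $\sigma_{ij}$ for all $i,j$; write $\pi_i$ for the supporting plane of $A_i$. Since $\sigma_{ij}\subseteq\pi_i$ is a side of the convex polygon $B_j$, the polygon $B_j$ lies in one of the two closed halfspaces bounded by $\pi_i$ (cf.\ the proof of \Cref{lemma:triangle}), unless $B_j\subseteq\pi_i$. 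Deferring those coplanar cases, each $B_j$ receives a \emph{sign vector} $\vec\varepsilon_j\in\{+,-\}^5$ recording on which side of each $\pi_i$ it lies, so that $B_j$ is contained in the convex polyhedron $R(\vec\varepsilon_j):=\bigcap_{i=1}^5 H_i^{\varepsilon_{ij}}$, an intersection of five closed halfspaces. Two structural facts will be used: first, $\sigma_{ij}$ lies on the line $\pi_i\cap\rho_j$ of the plane $\rho_j$ carrying $B_j$, so $B_j$ is a convex polygon with a distinct side on each of the five lines $\pi_1\cap\rho_j,\dots,\pi_5\cap\rho_j$; second, $\sigma_{ij}\subseteq B_j\subseteq R(\vec\varepsilon_j)$ forces the side $\sigma_{ij}$ of $A_i$ to lie on the face $R(\vec\varepsilon_j)\cap\pi_i$.

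First I would treat the main case, where the planes $\pi_1,\dots,\pi_5$ are in general position. Then every nonempty cell $R(\vec\varepsilon)$ is a convex polyhedron -- combinatorially a triangular prism or a, possibly unbounded, square pyramid -- each of whose facets lies on a distinct plane $\pi_i$, and $R(\vec\varepsilon)\cap\pi_i$ is either a two-dimensional facet or empty, since inside $\pi_i$ it is an intersection of halfplanes bounded by four lines in general position. As $\sigma_{ij}$ lies on $R(\vec\varepsilon_j)\cap\pi_i$, every sign vector that actually occurs has $R(\vec\varepsilon)$ bounded by a facet on all five planes, and the number $c$ of such sign vectors is a constant, since the combinatorial type of five generic planes in $\mathbb{R}^3$ is unique. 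By the pigeonhole principle some sign vector $\vec\varepsilon$ is shared by at least $\lceil 81/c\rceil$ of the $B_j$; call this set $\mathcal{B}'$. The polygons of $\mathcal{B}'$ are pairwise interior-disjoint convex polygons lying in the single five-faceted polyhedron $R:=R(\vec\varepsilon)$, each sharing a full side with each facet $F_i$ of $R$, and on each $F_i$ these shared sides are pairwise disjoint full sides of the convex polygon $A_i$ (so at most one of them lies on any one edge-line of $F_i$).

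The crux, and the step I expect to be the main obstacle, is a packing lemma of roughly the following shape: a convex polyhedron with exactly five facets contains at most a small constant $k$ of pairwise-interior-disjoint convex polygons that each share a full side with each of its five facets. A polygon witnessing this must ``span'' $R$, reaching all five facets at once, which is rigid: inside $R$ it is cut out by its five sides on the facet planes; two such spanning polygons compete for the bounded supply of facets and edges of $R$; and convexity enforces the strong one-sidedness already exploited in \Cref{obs:K5}. I would prove the lemma by a short case distinction over the (two bounded, plus a few unbounded) combinatorial types of $R$, tracking on each facet how the chords cut out by the polygons of $\mathcal{B}'$ may be nested or interleaved. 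Since $|\mathcal{B}'|\le k$ but $|\mathcal{B}'|\ge\lceil 81/c\rceil$, a contradiction follows whenever $c\cdot k<81$; the value $81$ in the statement is exactly what this accounting produces -- for instance, it would suffice that at most sixteen sign vectors occur and that no six of the $B_j$ share one.

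Finally I would return to the cases set aside at the start: a polygon $B_j$ with $B_j\subseteq\pi_i$ for some $i$, or two planes $\pi_i$ parallel or sharing a line. These cannot be removed by a perturbation argument, since side contacts are not generic, so they must be handled directly; I expect this to be tedious rather than genuinely difficult. For instance, if $B_j\subseteq\pi_i$, then each of the other four polygons $A_k$ must carry a side contained in $\pi_i$, a considerably more constrained situation than the main case, which can be analysed separately and absorbed into the same bound.
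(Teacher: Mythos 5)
Your outline replaces the paper's argument by a different decomposition (pigeonhole over the cells of the arrangement of the five planes $\pi_1,\dots,\pi_5$, followed by a per-cell packing bound), but as written it has a genuine gap exactly where the difficulty of the theorem sits. The ``packing lemma'' you defer -- at most a small constant $k$ of pairwise interior-disjoint convex polygons inside a five-facet cell, each sharing a full side with each facet -- is not proved, and it is not a routine case distinction. Note that in the paper the analogous per-cell bound (\Cref{lem:oneside}, at most two blue polygons per cell of a \emph{three}-plane arrangement) is only provable because of the extra \emph{one-sidedness} hypothesis: all red polygons lie on the same side of each blue supporting plane. That hypothesis is manufactured by the paper's charging step -- each blue polygon is charged to a triple of red polygons lying on one side of it, and then $\binom{5}{3}\cdot 8=80<81$ -- and it is precisely what kills the third candidate polygon in a cell (the upper/lower segment argument). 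Your pigeonhole over sign vectors gives you no such one-sidedness: a blue polygon confined to a cell $R(\vec\varepsilon)$ may still have some of the $A_i$ on either side of its own supporting plane, so the paper's style of argument does not transfer, and you would have to prove a strictly stronger statement. Until $k$ is actually bounded, the proof does not exist.

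Two further problems. First, the quantitative accounting that produces the constant $81$ is asserted, not established: your claim that ``the combinatorial type of five generic planes in $\mathbb{R}^3$ is unique'' is false (already for five planes the face lattice of a simple arrangement is not unique -- dually, five generic points may or may not be in convex position), so $c$ is not pinned down by uniqueness; you would instead need an explicit bound such as ``at most $16$ cells have a facet on all five planes'' (provable by counting cell--facet incidences, but you do not do it) together with $k\le 5$, neither of which is shown, and the product bound $c\cdot k<81$ is the whole content of the theorem's constant. Second, the degenerate configurations (a $B_j$ coplanar with some $\pi_i$, parallel or concurrent planes) are set aside with the remark that they are ``tedious''; the paper must and does treat the degeneracies of its $3$-plane arrangement inside \Cref{lem:oneside} (coinciding planes via strict convexity and non-planarity of $K_{3,3}$, the line-intersection case, and the case where $r_*$ lies on a red polygon, which changes the per-cell bound from $2$ to $1$), and in your five-plane setting the corresponding degeneracies would multiply rather than simplify. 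In short, the skeleton (pigeonhole plus per-cell bound) is a legitimate alternative strategy, but the per-cell bound is the theorem's core and is missing, and the constants that justify ``$81$'' are nowhere derived.
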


To prove the theorem we start with some observations about realizing
complete bipartite graphs. We will consider a set $R$ of red polygons,
and a set $B$ of blue polygons, so that each red--blue pair must have
a side contact.  For each $p\in R\cup B$, we denote by $p^=$ the
supporting plane of $p$, by $p^-$ the closed half-space left of $p^=$,
and by $p^+$ the closed half-space right of $p^=$ (orientations can be
chosen arbitrarily).  We start with a simpler setting where we have an
additional constraint.  We call $B$ \emph{one-sided with respect to
  $R$} if, for each blue polygon~$b$, all red polygons lie in the same
half-space with respect to~$b$, i.e.,
$\forall b \in B \colon ((\forall r \in R \colon r \subseteq b^-) \vee
(\forall r \in R \colon r \subseteq b^+))$.

\begin{lemma}
  \label{lem:oneside}
  Let $R$ and $B$ be two sets of convex polygons in $\mathbb{R}^3$
  realizing $K_{|R|,|B|}$. If $|R|=3$ and $B$ is one-sided with
  respect to $R$, then $|B|\leq 8$.
\end{lemma}

\begin{proof}
  Let $R = \{r_1, r_2, r_3\}$ and let $\mathcal A$ be the arrangement
  of the supporting planes of $R$.  Assume that $B$ is one-sided with
  respect to $R$ and consider a polygon $b \in B$. For every polygon
  $r_i \in R$, since $b$ is convex and shares a side with $r_i$, $b$
  is contained in $r_i^-$ or $r_i^+$.
  Thus, $b$ is contained in a (closed) cell $C$ of $\mathcal A$.  Let
  $r_*=r_1^=\cap r_2^=\cap r_3^=$ be the intersection of the
  supporting planes of $R$.
   
  We will first argue about the case where $r_*$ is not a point.  We
  may assume that no two supporting planes of $R$ coincide; otherwise,
  by strict convexity, two coplanar red polygons imply that all blue
  polygons lie in the same plane. Moreover, if $|B|\ge 2$, it follows
  symmetrically that all red polygons are coplanar. Hence, all
  polygons must lie in the same plane and the non-planarity of
  $K_{3,3}$ implies that $|B| \le 2$.  It follows that $r_*$ is not a
  plane.  Further, if $r_*$ is a line, then $C$ has only two bounding
  planes and therefore one of the red polygons is only present as a
  subset of $r_*$; see~\Cref{fig:lem10-casesA}. This implies that $b$
  has a side on $r_*$ and on each of the open half-planes bounding
  $C$, which is impossible.  Finally, if $r_*=\emptyset$, we can apply
  a projective transformation such that the bounding planes of the
  three red polygons intersect. Therefore, we can assume that
  $r_*\neq \emptyset$.
  
  It remains to consider the case where $r_*$ is a point, in which
  case the arrangement~$\mathcal{A}$ defines eight (closed) cells,
  called \emph{octants}, of the form
  $Q^{\alpha_0 \beta_0 \gamma_0} = r_1^{\alpha_0} \cap r_2^{\beta_0}
  \cap r_3^{\gamma_0}$, where
  $\alpha_0,\beta_0,\gamma_0\in\{\mathtt{+},\mathtt{-}\}$.  We
  distinguish two subcases: either (1)~no polygon in~$R$ contains the
  point~$r_*$ (see \cref{fig:lem10-casesB}) or there is a red polygon
  whose supporting plane contains a blue polygon, or (2)~$r_*$ is
  contained in a (single) polygon in~$R$ (see \cref{fig:lem10-casesC})
  and there is no red polygon whose supporting plane contains a blue
  polygon.
  
  \begin{figure}[htb]
    \begin{subfigure}[t]{0.27\textwidth}
      \centering \includegraphics[page=5]{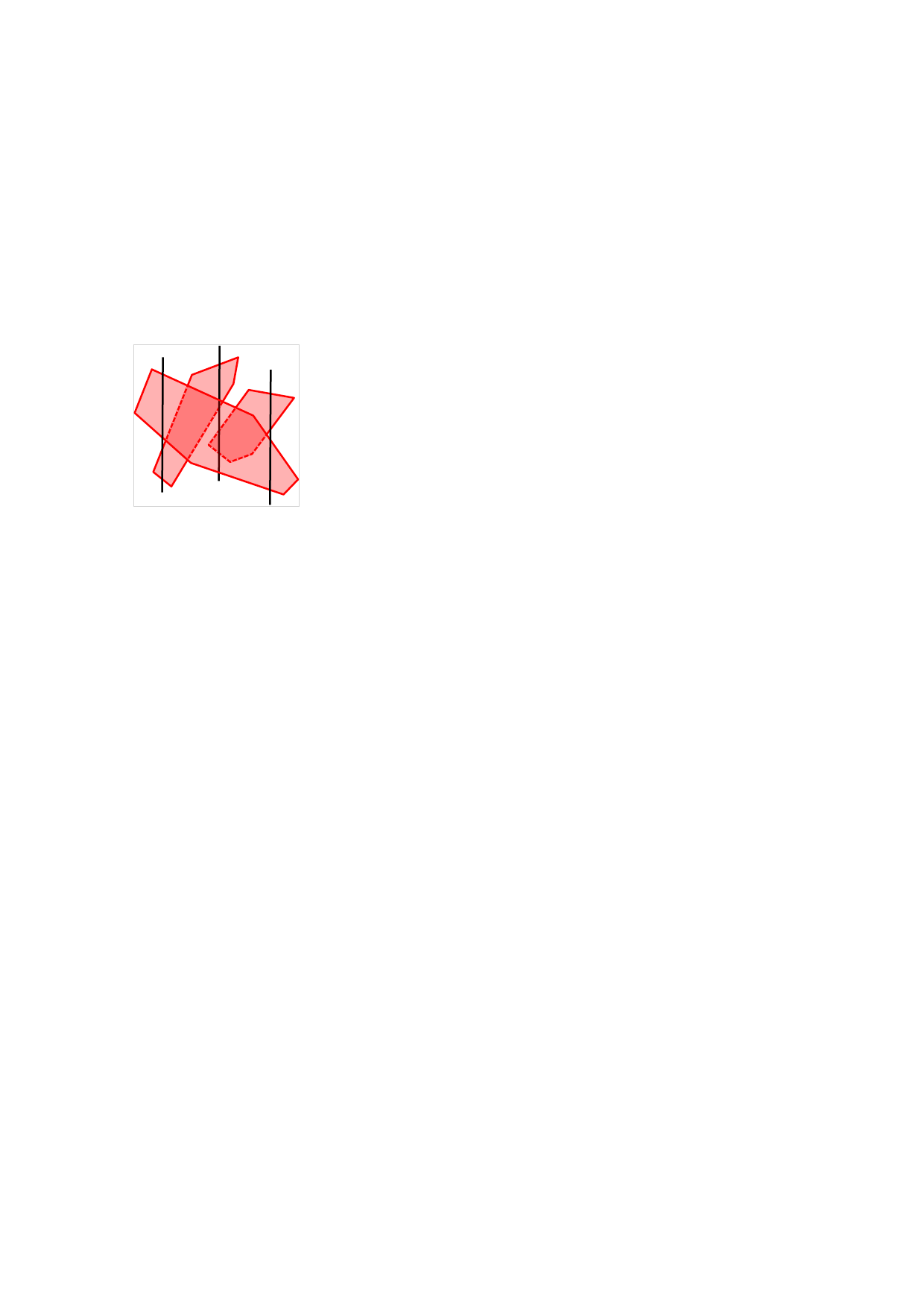}
      \caption{$r_*$ is a line}
      \label{fig:lem10-casesA}
    \end{subfigure}
    \hfill
    \begin{subfigure}[t]{0.33\textwidth}
      \centering \includegraphics[page=2]{lem10-cases}
      \caption{no red polygon contains $r_*$ (Case~1)}
      \label{fig:lem10-casesB}
    \end{subfigure}
    \hfill
    \begin{subfigure}[t]{0.33\textwidth}
      \centering \includegraphics[page=3]{lem10-cases}
      \caption{a red polygon contains $r_*$ (Case~2)}
      \label{fig:lem10-casesC}
    \end{subfigure}
    \caption{The three red polygons and the intersection $r_*$ of
      their supporting planes.}
    \label{fig:lem10-cases}
  \end{figure}

  {\vspace{.5em}\noindent\bf Case~1:} \emph{No polygon in~$R$ contains
    the point $r_*$ or there is a red polygon whose supporting plane
    contains a blue polygon.}  Our plan is to show that there are four
  octants whose union contains all blue polygons and that each of
  these four octants contains at most two blue polygons, which implies
  that the total number of blue polygons is bounded by $4\cdot 2=8$,
  as claimed.

  We start to show that there are four octants whose union contains
  all blue polygons.  To this end, we distinguish two subcases.
  
  {\vspace{.5em}\noindent\bf Case~1.1:} \emph{No polygon in~$R$
    contains the point $r_*$.}  Since the point $r_*$ is disjoint from
  all red polygons, each red polygon lies on the boundary of at most
  six octants.  More precisely, there are signs
  $\alpha_2, \alpha_3, \beta_1, \beta_3, \gamma_1, \gamma_2 \in
  \{\mathtt{+},\mathtt{-}\}$ such that $r_1$ cannot intersect the two
  octants $Q^{\pm \beta_1 \gamma_1}$, $r_2$ cannot intersect the two
  octants $Q^{\alpha_2{\pm}\gamma_2}$, and $r_3$ cannot intersect the
  two octants $Q^{\alpha_3 \beta_3 \pm}$.  It is now easy to verify
  that there are at most four octants that intersect all three red
  polygons.  For example, if
  $\alpha_2=\alpha_3=\beta_1=\beta_3=\gamma_1=\gamma_2= {\mathtt{+}}$,
  then only the octants $Q^{+--},Q^{-+-},Q^{--+},$ and $Q^{---}$ can
  intersect all three red polygons.  Since every blue polygon has to
  be contained in one of these four octants, the claim follows.
 
  {\vspace{.5em}\noindent\bf Case~1.2:} \emph{There is a red polygon,
    say~$r_1$, whose supporting plane~$r_1^=$ contains a blue polygon,
    say~$b_1$.}  Recall that we assume that no two supporting planes
  of red polygons coincide.  Symmetrically, we may assume that no two
  supporting planes of blue polygons coincide.  Hence, without loss of
  generality, we may assume that~$r_2$ and~$r_3$ intersect the
  interior of $b_1^+$, which, without loss of generality, coincides
  with $r_1^+$.  Consequently, all blue polygons are contained in
  $r_1^+$ and, thus, they are contained in the union of the four
  corresponding octants.
   
  \vspace{.5em} So far, we have shown that there are four octants
  whose union contains all blue polygons.  Now consider one octant~$C$
  that contains a blue polygon (and is thus incident to all~$r_i$).
  For the argument within this cell, we can truncate all $r_i$ to $C$.
  We claim that there can be at most two blue polygons in $C$. Assume
  towards a contradiction that we have three such polygons~$b_1$,
  $b_2$, and~$b_3$.  For each $i \in \{1,2,3\}$, the polygon~$b_i$ has
  three sides in common with $r_1,r_2,r_3$.  Let~$b_i'$ be the convex
  hull of these three sides.  Consider now the set
  $P = \{r_1,r_2,r_3,b_1',b_2',b_3'\}$.  Let $b'_i$ and $b'_j$ be two
  different (partial) blue polygons.  Since $b^=_i$ has all (sides of)
  red polygons on one side, it has also the three sides defining
  $b'_j$ on one side. Hence, $b_j'$ is on one side of the supporting
  plane of $b'_i$, and this side is the same for all polygons in
  $P\setminus\{b'_i\}$. On the other hand, by the definition of~$C$,
  every $r^=_i$ has all polygons in $P\setminus\{r_i\}$ on one common
  side.  Thus, the polygons in~$P$ are in convex position.  Consider
  now the convex hull $\mathcal{H}$ of~$P$.  We get that $\mathcal{H}$
  is a convex polyhedron with the polygons of $P$ embedded on its
  surface.  We can draw the contact graph of $P$ on that surface
  without crossings.  Since the surface is homeomorphic to a sphere,
  we obtain a contradiction since the contact graph is a $K_{3,3}$ and
  therefore nonplanar.  Thus, any octant can contain at most two blue
  polygons, as claimed.
  
  Altogether, we have shown that there are four octants whose union
  contains all blue polygons and that each of these four octants
  contains at most two blue polygons, which implies that the total
  number of blue polygons is bounded by $4\cdot 2=8$ in Case~1.
  
  {\vspace{.5em}\noindent\bf Case~2:} \emph{A red polygon
    contains~$r_*$ and there is no red polygon whose supporting plane
    contains a blue polygon.}  Without loss of generality,
  $r_* \in r_1$.  Similar to Case~1.1, the polygons~$r_2$ and~$r_3$
  both lie on the boundary of at most six octants, which implies that
  there are at most five octants that intersect every polygon in~$R$.
  We claim that at most one polygon of~$B$ can intersect any given
  octant.  Consider an octant~$Q$ and assume that it is intersected by
  two one-sided blue polygons $b_1$ and $b_2$.  Note that $Q$ is
  bounded by three (unbounded) faces~$f_1$, $f_2$, and~$f_3$ such
  that, for $j \in \{1,2,3\}$, the red polygon~$r_j$ (truncated
  to~$Q$) lies in~$f_j$.
  
  Let $\ell_{i,j}$ denote the intersection of the plane~$b^=_i$ and
  the face~$f_j$. Because there is no red polygon whose supporting
  plane contains a blue polygon, $\ell_{i,j}$ is not the entire
  face~$f_j$ but a segment or a ray. Let $t_i$ be the trace formed by
  $\ell_{i,1}$, $\ell_{i,2}$, $\ell_{i,3}$.  Then~$t_i$ is either a
  triangle or the concatenation of two rays and a segment; see
  \Cref{fig:lem10-onecellC,fig:lem10-onecellE}.  Note that the
  traces~$t_1$ and~$t_2$ intersect in at most two points.
 
  Given two different faces $f_i$ and~$f_j$ of~$Q$, we call their
  intersection $f_i \cap f_j$ an \emph{axis} of~$Q$.  Note that each
  trace intersects at least two of the three axes of~$Q$.
  Hence, there is a face~$f_j$ such that~$\ell_{1,j}$ and~$\ell_{2,j}$
  have endpoints on the same axis contained in~$f_j$.  We complete the
  proof by distinguishing two subcases, depending on the intersection
  of $\ell_{1,j}$ and $\ell_{2,j}$.

  {\vspace{.5em}\noindent\bf Case~2.1:} \emph{$\ell_{1,j}$ and
    $\ell_{2,j}$ do not intersect in the relative interior of $f_j$;
    see \cref{fig:lem10-onecellF}.}  Then one of them (say
  $\ell_{1,j}$) separates the other (say $\ell_{2,j}$) from $r_*$ on
  $f_j$.  Because $r_j$ lies between~$\ell_{1,j}$ and~$\ell_{2,j}$, we
  get that $\ell_{1,j}$ separates~$r_*$ from~$r_j$.  This is a
  contradiction to the fact that $b_1$ is one-sided.  Thus, there are
  at most $1 \cdot 5 = 5$ polygons in $B$ in Case~2.1.

  {\vspace{.5em}\noindent\bf Case~2.2:} \emph{$\ell_{1,j}$ and
    $\ell_{2,j}$ intersect in the relative interior of $f_j$.}  Then
  there is a $j' \in \{1,2,3\} \setminus \{j\}$ such that
  $\ell_{1,j'}$ and $\ell_{2,j'}$ have endpoints on the same axis and
  do not intersect; otherwise the traces intersect three times.
  Hence, if we replace~$j$ by~$j'$, we are in Case~2.1.
\end {proof}
  
\begin{figure}[htb]
  \begin{subfigure}[t]{0.24\textwidth}
    \centering \includegraphics[page=7]{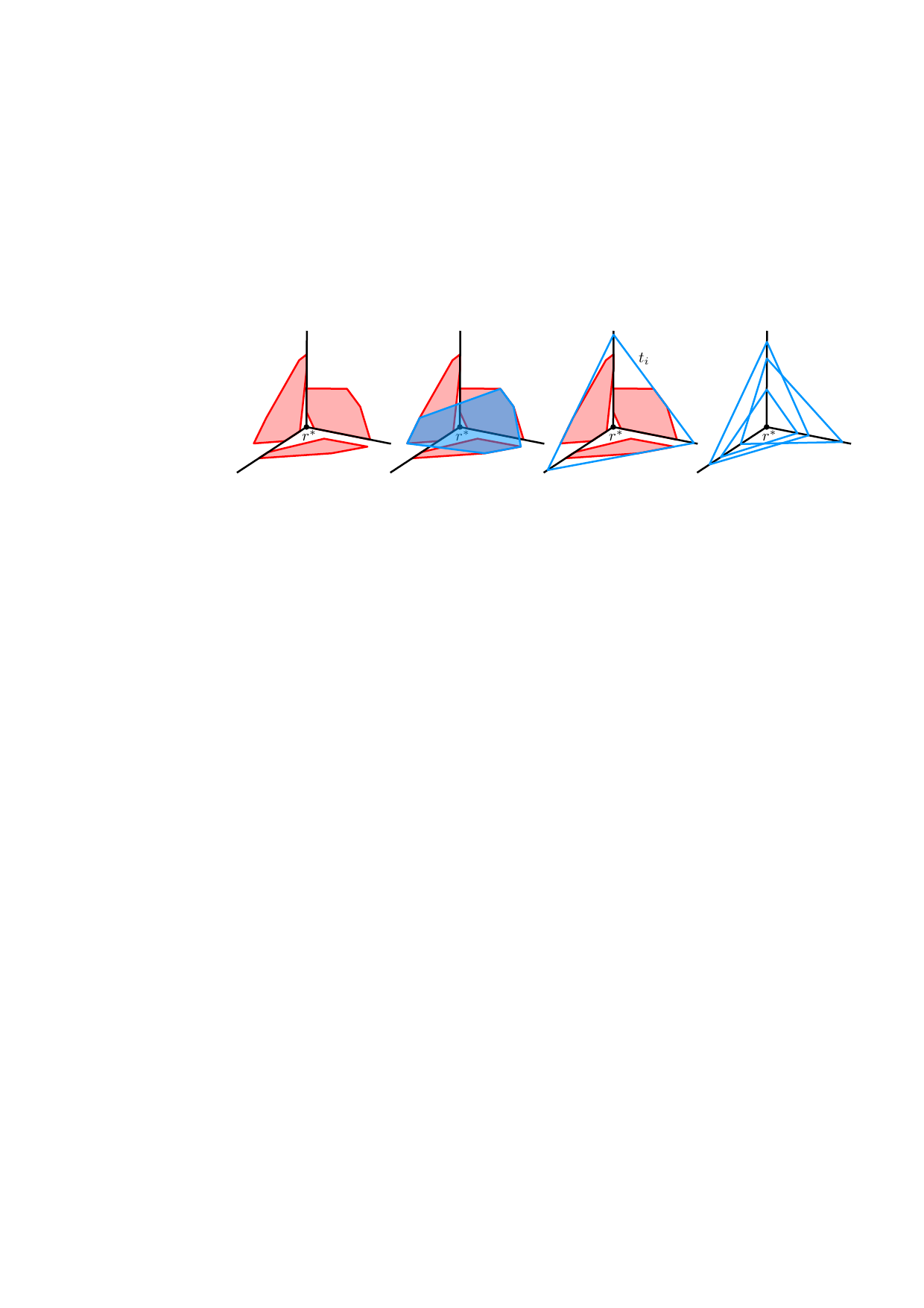}
    \caption{}
    \label{fig:lem10-onecellB}
  \end{subfigure}
  \hfil
  \begin{subfigure}[t]{0.24\textwidth}
    \centering \includegraphics[page=8]{lem10-onecell}
    \caption{}
    \label{fig:lem10-onecellC}
  \end{subfigure}
  \hfil
  \begin{subfigure}[t]{0.24\textwidth}
    \centering \includegraphics[page=12]{lem10-onecell}
    \caption{}
    \label{fig:lem10-onecellE}
  \end{subfigure}
  \hfil
  \begin{subfigure}[t]{0.24\textwidth}
    \centering \includegraphics[page=13]{lem10-onecell}
    \caption{}
    \label{fig:lem10-onecellF}
  \end{subfigure}
  \caption{Illustration of Case~2 for the proof of \cref{lem:oneside}.
    (a) A single octant (towards the viewer) with three truncated red
    polygons and a possible blue polygon within the octant that has a
    side contact with all three red polygons, (b)~the trace of a
    possible blue polygon forming a triangle, (c) the trace of a
    possible blue polygon consisting of two rays and a segment, and
    (d)~two traces yield at least one blue polygon that is not
    one-sided.}
  \label{fig:lem10-onecell}
\end{figure}

With the help of \Cref{lem:oneside} we can now prove \Cref{thm:k581}.

\begin{proof}[Proof of \Cref{thm:k581}]
  Assume that $K_{5,81}$ can be realized, and let $R$ be a set of five
  red polygons.  Since every $b \in B$ is adjacent to all polygons in
  $R$, $b$ partitions $R$ into two sets: those in $b^-$ and those in
  $b^+$. At least one of these subsets must have at least three
  elements. Arbitrarily charge $b$ to such a set of three polygons.
  By \Cref{lem:oneside}, each set of three red polygons can be charged
  at most eight times. There are ${5 \choose 3} = 10$ sets of three
  red polygons.  Therefore, there can be at most $8 \cdot 10 = 80$
  blue polygons; a contradiction. Together with
  \Cref{prop:subgraphsSubdivisions} this implies the claim.
\end {proof}

\subsection{3-Trees}

The graph class of $3$-trees is recursively defined as follows: $K_4$
is a $3$-tree.  A graph obtained from a $3$-tree~$G$ by adding a new
vertex~$x$ with exactly three neighbors~$u,v,w$ that form a triangle
in~$G$ is a $3$-tree.  We say~$x$ is \emph{stacked} on the
triangle~$uvw$.  It follows that for each $3$-tree there exists a (not
necessarily unique) \emph{construction sequence} of $3$-trees
$G_4,G_5,\dots,G_n$ such that $G_4\simeq K_4$, $G_n=G$, and where for
$i=4,5,\dots, n-1$ the graph $G_{i+1}$ is obtained from~$G_i$ by
stacking a vertex~$v_{i+1}$ on some triangle of~$G_i$.

By \Cref{planar:2d}, for every planar $3$-tree~$G$ there is a
polyhedral surface~\Sur (even in~$\mathbb{R}^2$) with
$\Gra(\Sur) \simeq G$.  On the other hand, we can show that no
nonplanar $3$-tree has such a realization in $\mathbb{R}^3$.  To this
end, we observe that a $3$-tree is nonplanar if and only if it
contains the \emph{\TheThreeTree} as a subgraph.  The \TheThreeTree is
the graph that consists of $K_{3,3}$ plus a cycle that connects the
vertices of one part of the bipartition; see \Cref{fig:3tree}.  We
show that the \TheThreeTree is not realizable.

\begin{figure}[htb]
  \centering \includegraphics{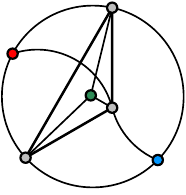}
  \caption{The unique minimal nonplanar 3-tree, which we call
    \TheThreeTree.}
  \label{fig:3tree}
\end{figure}

\begin{lemma}\label{prop:stackOnBothSides}
  Let~$uvw$ be a separating triangle in a plane $3$-tree $G=(V,E)$.
  Then there exist vertices~$a,b\in V$ that belong to distinct sides
  of~$uvw$ in~$G$ such that both $\{a,u,v,w\}$ and $\{b,u,v,w\}$
  induce a~$K_4$ in~$G$.
\end{lemma}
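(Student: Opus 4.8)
The plan is to prove, by induction on $|V|$, a statement slightly stronger than the lemma that also handles facial triangles: \emph{in every plane $3$-tree $G$ on at least four vertices and for every triangle $uvw$ of $G$, every side of $uvw$ on which $G$ has at least one vertex contains a vertex adjacent to all of $u$, $v$, and~$w$.} Since a separating triangle has vertices on both of its sides, this immediately gives vertices $a$ and $b$ on distinct sides of $uvw$, each adjacent to $u$, $v$, and~$w$; as $uv, vw, uw \in E$, the sets $\{a,u,v,w\}$ and $\{b,u,v,w\}$ then induce copies of $K_4$, which is exactly what the lemma asserts. (An alternative route would be to first argue that the subgraph of $G$ drawn on either side of $uvw$, together with the triangle, is itself a plane $3$-tree with $uvw$ as a face, and then treat a facial triangle; but folding everything into one induction is shorter.)

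First I would settle the base case $|V| = 4$: then $G = K_4$, the triangle $uvw$ is a face, its inner side is empty, and its other side consists of the fourth vertex, which is adjacent to $u$, $v$, and~$w$. For the inductive step with $|V| \ge 5$ I would use two standard facts about $3$-trees: a $3$-tree on more than four vertices has at least two simplicial vertices (degree-$3$ vertices whose neighbourhood is a triangle), and no two simplicial vertices of such a $3$-tree are adjacent. Consequently at most one of $u$, $v$, $w$ is simplicial, so there is a simplicial vertex $z \notin \{u,v,w\}$. This $z$ lies strictly inside one of the two regions bounded by $uvw$, say the one on side~$1$, and $G - z$ is again a plane $3$-tree, now on $|V| - 1 \ge 4$ vertices, in which $uvw$ is still a triangle and whose set of vertices on side~$2$ is unchanged.

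The case analysis then proceeds as follows. For side~$2$: if it carries a vertex of $G$, then it carries one in $G - z$ as well, and the induction hypothesis applied to $G - z$ yields a vertex $b$ on side~$2$ adjacent to all of $u$, $v$, $w$ in $G - z$, hence also in $G$. For side~$1$: it carries a vertex, namely $z$; if it carries a further vertex, the induction hypothesis applied to $G - z$ supplies a suitable vertex $a$ on side~$1$ (which keeps its edges to $u$, $v$, $w$ in $G$); and if $z$ is the only vertex on side~$1$, then, since $z$ is simplicial and every edge incident to $z$ stays inside its region of $uvw$, the three neighbours of $z$ must be exactly $u$, $v$, and~$w$, so $a := z$ works.

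The work that needs the most care is topological bookkeeping rather than any hard idea: one should note that ``side of $uvw$'' is well defined because plane $3$-trees on at least four vertices are $3$-connected and hence have an essentially unique embedding; that every edge incident to a vertex lying strictly inside one of the two regions bounded by $uvw$ stays inside that region (so in particular removing a simplicial $z \notin \{u,v,w\}$ from one side alters neither the triangle $uvw$ nor the vertices on the other side); and that deleting a simplicial vertex of a $3$-tree again yields a $3$-tree. The two facts about simplicial vertices are routine; the first is classical, and the second holds because two adjacent degree-$3$ vertices would share the same triangular neighbourhood, forcing $G = K_4$.
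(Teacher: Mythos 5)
Your proof is correct, but it takes a genuinely different route from the paper's. The paper argues directly on a fixed construction sequence $G_4,\dots,G_n$: it chooses $k$ maximal with $uvw$ nonseparating in $G_k$, so that $a=v_{k+1}$ must be the vertex stacked on $uvw$ (this yields the $K_4$ on one side, the side that is empty of $G_k$-vertices), and it obtains $b$ from the step $j$ at which the triangle $uvw$ first appears: the last of $u,v,w$ to be created was stacked onto a triangle whose third vertex is the desired $b$ on the other side. Your argument instead peels off a simplicial vertex $z\notin\{u,v,w\}$ and inducts on $|V|$ with a strengthened hypothesis covering facial triangles (``every nonempty side of every triangle contains a common neighbor of $u,v,w$''). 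Your version buys a cleaner handling of the embedding bookkeeping (the sides of $uvw$ are manifestly preserved when a simplicial vertex on one side is deleted), at the price of importing the standard $3$-tree facts you list (two nonadjacent simplicial vertices, deletion of a simplicial vertex yields a $3$-tree); the paper's version is shorter and needs only the easy Jordan-curve observation that a facial placement of $a$ forces all of $G_k$ to the other side, though it leaves the ``distinct sides'' bookkeeping more implicit than you do.

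One step of your reduction deserves an explicit justification: the opening claim that a \emph{separating} triangle of a plane $3$-tree has vertices strictly on \emph{both} of its sides. The Jordan-curve argument only gives the converse (vertices on both sides imply separating); what you need is that a triangle with one empty side, i.e.\ a facial triangle, is non-separating. This is true because a planar $3$-tree on at least four vertices is a plane triangulation, and in a plane triangulation the link of each vertex is a cycle, so the neighbors of $u$, $v$, $w$ outside $\{u,v,w\}$ form a connected subgraph meeting every component of $G-\{u,v,w\}$ when $uvw$ bounds a face (alternatively, one can prove the claim by the same induction you already run). Note that the paper's proof never needs this harder direction, only the easy one. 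With a sentence to this effect, and with the routine verification of the two simplicial-vertex facts (e.g., via Dirac's two nonadjacent simplicial vertices and $3$-connectivity of $3$-trees), your argument is complete.
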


\begin{proof}
  Let $G_4,G_5,\dots,G_n$ denote a construction sequence of~$G=G_n$,
  and let~$k$ be the largest index in $\{4,5,\dots,n\}$ such that
  $uvw$ is nonseparating in~$G_k$.  Since~$uvw$ is separating
  in~$G_{k+1}$, it follows that the vertex~$v_{k+1}=a$ is stacked
  on~$uvw$ (say, inside $uvw$) to obtain~$G_{k+1}$ and, hence,
  $\{a,u,v,w\}$ induce a~$K_4$ in~$G_{k+1}$ and~$G$.
 
  It remains to argue about the existence of the vertex~$b$ in the
  exterior of~$uvw$.  If~$uvw$ is one of the triangles of the
  original~$G_4\simeq K_4$, there is nothing to show, so assume
  otherwise.  Let~$j$ be the smallest index in $\{5,6,\dots, n\}$ such
  that $uvw$ is contained in~$G_j$.  It follows that one of~$u,v,w$,
  say~$u$, is the vertex~$v_j$ that was stacked on some triangle~$xyz$
  of~$G_{j-1}$ to obtain~$G_j$.  Without loss of generality, we may
  assume that~$\{v,w\}=\{y,z\}$.  It follows that~$x=b$ forms a~$K_4$
  with~$u,v,w$ in~$G_j$ and~$G$.
\end{proof}

\begin{lemma}\label{lem:characterizePlanarThreeTree}
  A $3$-tree is nonplanar if and only if it contains the \TheThreeTree
  as a subgraph.
\end{lemma}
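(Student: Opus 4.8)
The plan is to prove both directions separately. For the easy direction, suppose a $3$-tree $G$ contains the \TheThreeTree as a subgraph. Since the \TheThreeTree contains $K_{3,3}$ as a subgraph, so does $G$, and hence $G$ is nonplanar by Kuratowski's theorem. (In fact one should double-check that the \TheThreeTree is genuinely nonplanar on its own: it is $K_{3,3}$ plus a triangle on one side, which certainly still contains the $K_{3,3}$ subdivision, so it is nonplanar; this suffices.)

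For the interesting direction, I would argue the contrapositive: if a $3$-tree $G$ does \emph{not} contain the \TheThreeTree as a subgraph, then $G$ is planar. Equivalently — and this is the cleaner formulation — I would prove: \emph{every nonplanar $3$-tree contains the \TheThreeTree.} The natural tool is induction along a construction sequence $G_4,G_5,\dots,G_n=G$. Let $G_k$ be the first graph in the sequence that is nonplanar; then $G_{k-1}$ is planar and $G_k$ is obtained by stacking a vertex $v_k$ on some triangle $uvw$ of $G_{k-1}$. The point is that adding a degree-$3$ vertex inside a face cannot destroy planarity, so the only way $G_k$ can be nonplanar is that the triangle $uvw$ is \emph{not} a face of any planar embedding of $G_{k-1}$ — in other words, $uvw$ is a \emph{separating} triangle of the plane $3$-tree $G_{k-1}$. (Here one uses that a plane $3$-tree has a unique embedding up to reflection and choice of outer face, so "is a face" is well defined; alternatively, one observes that a $3$-tree in which no triangle is separating is outerplanar-like and stacking stays planar.) Now apply \Cref{prop:stackOnBothSides} to the separating triangle $uvw$ in $G_{k-1}$: there are vertices $a,b$ on opposite sides of $uvw$ with $\{a,u,v,w\}$ and $\{b,u,v,w\}$ each inducing a $K_4$. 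Together with the stacked vertex $v_k$, which sits on yet a third "side" of $uvw$ (inside the face on which it was stacked, which is separated from both $a$ and $b$), the three vertices $v_k, a, b$ are each adjacent to all of $u,v,w$, and $u,v,w$ form a triangle: this is exactly a copy of $K_{3,3}$ (parts $\{v_k,a,b\}$ and $\{u,v,w\}$) together with the triangle on $\{u,v,w\}$, i.e.\ the \TheThreeTree, as a subgraph of $G_k \subseteq G$.

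The main obstacle I anticipate is making precise the step "$G_k$ nonplanar $\Rightarrow$ $uvw$ is separating in $G_{k-1}$" and, relatedly, verifying that the three vertices $v_k$, $a$, $b$ are genuinely distinct and lie on three mutually distinct sides so that no two of the required $K_{3,3}$ edges coincide or collapse. For the first part one wants the fact that a connected plane graph to which one repeatedly stacks degree-$3$ vertices into faces remains planar, so nonplanarity can only enter when $v_k$ is stacked on a triangle that is not a face — and in a $3$-tree a triangle fails to be a face of every planar embedding exactly when it is separating. For distinctness: $a$ and $b$ lie on opposite sides of $uvw$ in $G_{k-1}$ by \Cref{prop:stackOnBothSides}, so $a\neq b$; and $v_k\notin V(G_{k-1})$, so $v_k$ differs from both. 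That $v_k$ does not create a chord making $\{v_k,u,v,w\}$ coincide with one of the other $K_4$'s is automatic since $v_k$ has degree exactly $3$. I would also remark that it is worth observing separately that the \TheThreeTree is itself a $3$-tree (stack the three cycle vertices on the triangle of a $K_4$ one at a time — wait, more carefully, build it as a plane $3$-tree with a central separating triangle), which makes the "minimal nonplanar $3$-tree" phrasing in \Cref{fig:3tree} legitimate, though this is not strictly needed for the lemma.
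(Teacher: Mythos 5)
Your proposal is correct and follows essentially the same route as the paper: take the first nonplanar graph $G_k$ in a construction sequence, use $3$-connectivity to fix the unique plane embedding of $G_{k-1}$, conclude that the stacking triangle $uvw$ must be separating, and then invoke \Cref{prop:stackOnBothSides} to obtain $a$ and $b$, which together with the stacked vertex $v_k$ yield the \TheThreeTree. Your extra care about why $uvw$ must be separating and why $v_k$, $a$, $b$ are distinct merely spells out details the paper leaves implicit.
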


\begin{proof}
  The \TheThreeTree is nonplanar because it contains a~$K_{3,3}$ (one
  part of the bipartition is formed by the gray vertices and the other
  by the colored vertices).

  For the other direction, let~$G$ be a nonplanar $3$-tree.  Let
  $G_4,G_5,\dots,G_n$ be a construction sequence of~$G$.  Let~$k$ be
  the smallest index in $\{4,5,\dots, n\}$ such that~$G_k$ is
  nonplanar.  By $3$-connectivity, the graph~$G_{k-1}$, which is
  planar, has a unique combinatorial embedding.  Therefore, we may
  consider~$G_{k-1}$ to be a plane graph.  Let~$uvw$ be the triangle
  that the vertex~$v_k$ was stacked on to obtain~$G_k$ from~$G_{k-1}$.
  Since~$G_k$ is nonplanar, the triangle~$uvw$ is a separating
  triangle of~$G_{k-1}$.  It follows by \Cref{prop:stackOnBothSides}
  that~$G_k$ (and, hence,~$G$) contains the \TheThreeTree.
\end{proof}

\begin{lemma}\label{lem:nonrealizable_3treeB}
  There exists no convex-polyhedral surface~\Sur in~$\mathbb{R}^3$
  such that the \TheThreeTree is subisomorphic to $\Gra(\Sur)$.
\end{lemma}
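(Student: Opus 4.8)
The plan is to reduce the statement to the configuration already analysed inside the proof of \Cref{lem:oneside}. By \Cref{prop:subgraphsSubdivisions} it is enough to rule out a convex-polyhedral surface \Sur whose adjacency graph is (isomorphic to) the \TheThreeTree itself, so I would assume such a \Sur exists and derive a contradiction. Write the \TheThreeTree as the triangle $uvw$ together with three further vertices $a,b,c$, each adjacent to every one of $u,v,w$; then $\{u,v,w\}$ and $\{a,b,c\}$ are the parts of the contained $K_{3,3}$. For a vertex $x$, let $\pi_x$ denote the supporting plane of the polygon $P_x$.

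The first step is to extract one-sidedness from \Cref{lemma:triangle}. Fix $x\in\{u,v,w\}$. Whenever $y,z\in\{v,w,a,b,c\}$ are such that $\{x,y,z\}$ spans a triangle of the \TheThreeTree, \Cref{lemma:triangle} places $P_y$ and $P_z$ in a common closed halfspace of $\pi_x$; the pairs $\{y,z\}$ obtained in this way form a connected graph on $\{v,w,a,b,c\}$ (both $v$ and $w$ are joined in it to each of $a,b,c$, and $v$ to $w$). A short case check, using strict convexity to see that coplanar red polygons force all six polygons to be coplanar, then shows that either all of $P_v,P_w,P_a,P_b,P_c$ lie in one closed halfspace $\pi_x^+$ of $\pi_x$, or all six polygons are coplanar; the latter is impossible, since a flat convex-polyhedral surface has a planar adjacency graph whereas the \TheThreeTree contains the nonplanar $K_{3,3}$. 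Hence $P_u\cup P_v\cup P_w\cup P_a\cup P_b\cup P_c \subseteq \mathcal W := \pi_u^+\cap\pi_v^+\cap\pi_w^+$. Symmetrically, for each $x\in\{a,b,c\}$ the triangles $xuv,xuw,xvw$ put $P_u,P_v,P_w$ in a common closed halfspace of $\pi_x$, so $B:=\{P_a,P_b,P_c\}$ is one-sided with respect to $R:=\{P_u,P_v,P_w\}$, and $R$ and $B$ realize $K_{3,3}$ by their red--blue side-contacts (the extra contacts inside $R$ are irrelevant to the argument of \Cref{lem:oneside}, which uses only red--blue contacts, convexity, and one-sidedness).

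The second step is to see that all three polygons of $B$ lie in a single cell of the arrangement $\mathcal A$ of the three planes of $R$. By the first step each of $P_a,P_b,P_c$ lies in $\pi_u^+$, in $\pi_v^+$, and in $\pi_w^+$, hence in $\mathcal W$. After disposing of the degenerate plane configurations exactly as at the beginning of the proof of \Cref{lem:oneside} (two of the planes coinciding forces coplanarity and reduces to the $K_{3,3}$ obstruction; $\pi_u\cap\pi_v\cap\pi_w$ being a line would force one of $P_u,P_v,P_w$ to be one-dimensional), $\mathcal W$ is the closure of a single cell of $\mathcal A$, and each of $P_u,P_v,P_w$ lies on the corresponding bounding face of $\mathcal W$ because $P_u\subseteq\pi_u\cap\pi_v^+\cap\pi_w^+$, and likewise for $P_v,P_w$. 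Thus $\mathcal W$ is a cell of $\mathcal A$ incident to all of $R$ and containing all three polygons of $B$. But the proof of \Cref{lem:oneside} shows that, when $|R|=3$ and $B$ is one-sided with respect to $R$, every cell of $\mathcal A$ incident to all of $R$ contains at most two polygons of $B$ (at most two in its Case~(1), at most one in its Case~(2)). This contradiction finishes the proof.

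I expect the main obstacle to be the bookkeeping around degeneracies — polygons sharing a supporting plane, the three red planes failing to be in general position, or a polygon lying inside a red plane — all of which must be funnelled either into the nonplanarity of $K_{3,3}$ (via the planarity of adjacency graphs of flat convex-polyhedral surfaces) or into the degenerate sub-cases already settled in \Cref{lem:oneside}. The geometric core — that three polygons of $B$, one-sided with respect to $R$ and each touching the three face-polygons $P_u,P_v,P_w$, cannot coexist in one cell — requires no new geometry, as it is precisely the inner content of \Cref{lem:oneside}.
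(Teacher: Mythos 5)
Your proposal is correct, and its first half in fact coincides with the first half of the paper's own proof: after reducing to an exact realization of the \TheThreeTree via \cref{prop:subgraphsSubdivisions}, the paper likewise uses \Cref{lemma:triangle} plus the strict-convexity/coplanarity cascade to conclude that the gray supporting planes are pairwise distinct and that all polygons lie in one closed cell of the gray plane arrangement. Where you genuinely diverge is the second half. The paper stays self-contained: it normalizes the cone case to a prism by a projective transformation, notes that each gray polygon needs a cell of complexity five inside its strip, and derives the contradiction from a parity argument on the inversions of the red/green/blue crossing order along the three bounding lines of the prism. You instead observe that $\{P_a,P_b,P_c\}$ is one-sided with respect to $\{P_u,P_v,P_w\}$ (which follows, as you say, from the three triangles through each colored vertex once coinciding supporting planes are excluded) and that all three colored polygons sit in the single cell~$W$ whose facets carry the gray polygons, and then you invoke the per-cell bounds established \emph{inside} the proof of \Cref{lem:oneside}: at most two one-sided polygons per cell in its Case~(1), at most one in its Case~(2), contradicting the presence of three. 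This works, and it makes the lemma an almost immediate consequence of the earlier analysis, avoiding the projective transformation and the parity bookkeeping; note, though, that the statement of \Cref{lem:oneside} itself ($|B|\le 8$) is useless here since $|B|=3$, so to make your argument formally clean one would have to extract the per-cell claim as a standalone lemma (with the case split on whether a red polygon contains $r_*$, and with the degenerate line/coplanar configurations handled as you sketch), which creates a dependency on the internals of that proof that the paper's own argument does not have. Your treatment of the line case ($\pi_u\cap\pi_v\cap\pi_w$ a line) should also cover the fully degenerate subcases in which $W$ itself is at most two-dimensional, but these collapse immediately to the coplanarity/$K_{3,3}$ contradiction, so this is only a matter of wording.
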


\begin{proof}
  We refer to the vertices of the \TheThreeTree as the three gray
  vertices and the three colored (red, green, and blue) vertices; see
  also \Cref{fig:3tree}.  Given the correspondence between vertices
  and polygons (and their supporting planes), we also refer to the
  polygons (and the supporting planes) as gray and colored.
  
  Assume that the \TheThreeTree can be realized.  Consider the
  arrangement of the gray supporting planes.  By strict convexity, it
  follows that if a pair of gray polygons has the same supporting
  plane, then all their common neighbors lie in the same plane. This
  implies that all supporting planes coincide -- a contradiction to
  the non-planarity of the \TheThreeTree.  Consequently, the gray
  supporting planes are pairwise distinct. (Likewise, it holds that no
  colored and gray supporting plane coincide.)
  
  We now argue that all colored polygons are contained in the same
  closed cell of the gray arrangement.  To see this, fix one gray
  polygon and observe, by \Cref{lemma:triangle}, that all polygons are
  contained in the same closed half space with respect to its
  supporting plane.
	
  Note that the gray plane arrangement has one of the following two
  combinatorics: either the three planes have a common point of
  intersection (cone case) or not (prism case).  In the first case,
  the planes partition the space into eight cones, one of which
  contains all polygons; in the second case, the (unbounded) cell
  containing all polygons forms a (unbounded) prism.  For a unified
  presentation, we transform any occurrence of the first case into the
  second case. To do so, we move the apex of the cone containing all
  polygons to the plane at infinity by a projective
  transformation. This turns each face of the cone into a strip that
  is bounded by two of the extremal rays of the cone, which we now
  have deformed into a prism.
	
  Consider one of the strips, which we call~$S$.  The strip~$S$ has to
  contain one of the gray polygons, which we call~$P_S$.  We know that
  $P_S$ has at least five sides, one for each neighbor.  Each of the
  two \emph{bounding lines} contains a side to realize the adjacency
  to the other two gray polygons. We call the sides of $P_S$ that
  realize the adjacencies to the remaining polygons red, green, and
  blue, in correspondence to the vertex colors. The supporting line of
  the red side intersects each bounding line of~$S$.  We add a red
  point at each of the intersections.  For the blue and green sides we
  proceed analogously.  By convexity of $P_S$, these points are
  distinct.  This yields a permutation of red, green, blue (see
  \Cref{fig:3treeorder}) on each bounding line.  The permutations on
  the boundary of two adjacent strips coincide because the supporting
  lines are clearly contained in the supporting planes.

  \begin{figure}[tb]
    \begin{subfigure}[b]{0.22\textwidth}
      \centering \includegraphics[page=4]{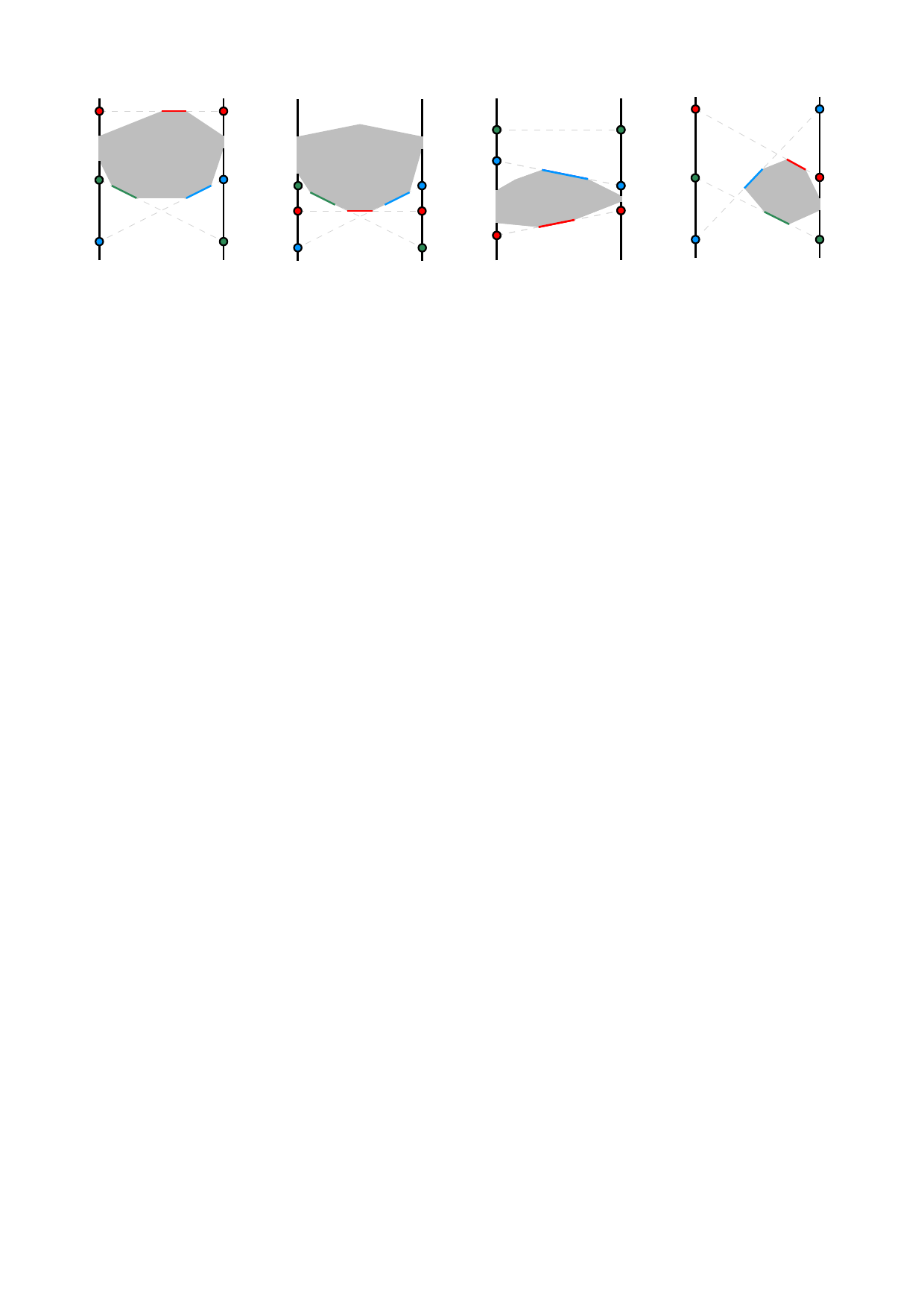}
      \caption{0 inversions}
      \label{fig:3treeorderA}
    \end{subfigure}
    \hfil
    \begin{subfigure}[b]{0.22\textwidth}
      \centering \includegraphics[page=2]{3trees-newproof-order}
      \caption{1 inversion}
      \label{fig:3treeorderB}
    \end{subfigure}
    \hfil
    \begin{subfigure}[b]{0.22\textwidth}
      \centering \includegraphics[page=5]{3trees-newproof-order}
      \caption{2 inversions}
      \label{fig:3treeorderC}
    \end{subfigure}
    \hfil
    \begin{subfigure}[b]{0.22\textwidth}
      \centering \includegraphics[page=3]{3trees-newproof-order}
      \caption{3 inversions}
      \label{fig:3treeorderD}
    \end{subfigure}
    \caption{The permutations of the intersections with the supporting
      lines of the red, green, and blue edges as in the proof of
      \Cref{lem:nonrealizable_3treeB}. Figures~(a) and~(c) illustrate
      possible scenarios.  Figures~(b) and~(d) show impossible
      scenarios because they do not contain cells of complexity 5.}
    \label{fig:3treeorder}
  \end{figure}

  Consider the line arrangement inside~$S$ given by the supporting
  lines of the red, green, and blue sides.  Up to symmetry,
  \Cref{fig:3treeorder} illustrates the different intersection
  patterns. To realize all contacts, the polygon~$P_S$ has to lie
  inside a cell incident to all five lines, namely the two bounding
  lines and the three supporting lines. It is easy to observe that
  such a cell exists only if the permutation has exactly one or three
  inversions; see \Cref{fig:3treeorderB,fig:3treeorderD}. In
  particular, the number of inversions is odd.
  
  Following the cyclic order of the bounding lines around the prism,
  we record three odd numbers of inversions in the permutations before
  coming back to the start.  Since an odd number of inversions does
  not yield the identity, we obtain the desired contradiction.
\end{proof}
      
Together,
\cref{lem:characterizePlanarThreeTree,lem:nonrealizable_3treeB} yield
the following theorem.

\begin{theorem}
  \label{thm:threeTree}
  Let $G$ be a 3-tree.  There exists a convex-polyhedral surface \Sur
  in~$\mathbb{R}^3$ with $\Gra(\Sur) \simeq G$ if and only if $G$ is
  planar.
\end{theorem}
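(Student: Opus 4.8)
The plan is to prove \Cref{thm:threeTree} by combining the structural characterization of nonplanar $3$-trees with the two directions of the biconditional handled separately. The forward direction (realizability implies planarity) is where the real work lies; the backward direction (planarity implies realizability) is essentially immediate.

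For the backward direction, suppose $G$ is a planar $3$-tree. Then by \Cref{planar:2d}, there exists a flat convex-polyhedral surface $\Sur$ in $\mathbb{R}^2 \subseteq \mathbb{R}^3$ with $\Gra(\Sur) \simeq G$, so we are done. This requires no further argument.

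For the forward direction, I would argue contrapositively: suppose $G$ is a nonplanar $3$-tree. By \Cref{lem:characterizePlanarThreeTree}, $G$ contains the \TheThreeTree as a subgraph, i.e., the \TheThreeTree is subisomorphic to $G$. Now suppose for contradiction that there were a convex-polyhedral surface $\Sur$ in $\mathbb{R}^3$ with $\Gra(\Sur) \simeq G$. Since the \TheThreeTree is subisomorphic to $G \simeq \Gra(\Sur)$, this directly contradicts \Cref{lem:nonrealizable_3treeB}, which asserts that no convex-polyhedral surface has the \TheThreeTree subisomorphic to its adjacency graph. Hence no such $\Sur$ exists, and the forward direction follows.

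The main obstacle — and the only nontrivial content — is already encapsulated in the two lemmas being invoked: \Cref{lem:characterizePlanarThreeTree} provides the combinatorial reduction to a single forbidden subgraph, and \Cref{lem:nonrealizable_3treeB} provides the geometric impossibility argument (the parity-of-inversions obstruction around the prism). Given both, the theorem itself is just a two-line bookkeeping step combining them with \Cref{planar:2d}; there is no additional difficulty to anticipate. One should only be slightly careful to phrase the statement symmetrically, noting that ``\TheThreeTree subisomorphic to $G$'' together with $\Gra(\Sur)\simeq G$ is exactly the hypothesis ruled out by \Cref{lem:nonrealizable_3treeB}.
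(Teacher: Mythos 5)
Your proposal is correct and follows exactly the paper's route: the planar direction is \Cref{planar:2d}, and the nonplanar direction combines \Cref{lem:characterizePlanarThreeTree} (a nonplanar $3$-tree contains the \TheThreeTree) with \Cref{lem:nonrealizable_3treeB} (the \TheThreeTree cannot be subisomorphic to the adjacency graph of any convex-polyhedral surface). The paper's own proof is precisely this two-lemma bookkeeping step, so there is nothing to add.
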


In contrast to \cref{thm:threeTree}, there are nonplanar 3-degenerate
graphs that can be realized; see the example in \Cref{fig:3deg}.

\subsection{Hypercubes}
\label{sec:hypercubes}
In a paper from 1983, McMullen, Schulz, and Wills construct a
polyhedron for every integer $p\ge 4$ such that all faces are convex
$p$-gons~\cite[Sect.~4]{msw-p2me3-IJM83}.  In the following, we show
and illustrate how their result proves the realizability of any
hypercube.

\begin{proposition}[\cite{msw-p2me3-IJM83}]
  \label{prop:hypercube}
  For every $d$-hypercube $Q_d$, $d\ge0$, there exists a
  convex-polyhedral surface~\Sur in~$\mathbb{R}^3$ with
  $\Gra(\Sur) \simeq Q_d$ and every polygon of~\Sur is a $(d+4)$-gon.
\end{proposition}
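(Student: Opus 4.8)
The plan is to realize $Q_d$ by induction on $d$, following~\cite[Sect.~4]{msw-p2me3-IJM83}, and exploiting that $Q_{d+1}$ is the \emph{prism} over $Q_d$: it is obtained from two vertex-disjoint copies of $Q_d$ by adding a perfect matching that joins every vertex to its twin in the other copy. For $d=0$ we take a single square, and for $d=1$ two pentagons sharing a side; in both cases the sides not used to realize an edge are left \emph{free}, i.e., not shared with any other polygon. The invariant we maintain is that a convex-polyhedral surface $\Sur_d$ realizing $Q_d$ consists of $2^d$ strictly convex $(d{+}4)$-gons, each having exactly $d$ shared sides (one per incident hypercube edge) and exactly four free sides; these four free sides per polygon are the slack that the next doubling step consumes. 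Note that even in the planar range $d\le 3$, \Cref{planar:2d} does not by itself produce this uniform $(d{+}4)$-gon structure, so the inductive construction is used for all $d$.

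For the step from $Q_d$ to $Q_{d+1}$, first we place $\Sur_d$ together with a second copy $\Sur_d'$ of $\Sur_d$ so that for every vertex $v$ the polygon $P_v'$ of $\Sur_d'$ lies close to one designated free side of the polygon $P_v$ of $\Sur_d$. Then we modify both copies only in a small neighborhood of these designated sides: one designated free side of $P_v$ is replaced by a shallow two-edge ``tab'' that reaches toward $P_v'$, the corresponding free side of $P_v'$ is replaced by a matching tab, and the two tabs are arranged to share exactly one segment. A short bookkeeping check shows that this turns each $(d{+}4)$-gon into a strictly convex $(d{+}5)$-gon that has $d{+}1$ shared sides and, again, exactly four free sides, so the invariant is restored; since the new shared sides realize precisely the matching edges and no others, $\Gra(\Sur_{d+1})\simeq Q_{d+1}$. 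For $d\le 3$ this can all be carried out in the plane, whereas for $d\ge 4$ the second copy must be lifted out of the plane of $\Sur_d$ --- which is exactly where the third dimension is needed, consistent with $Q_4$ being nonplanar. (An alternative route via~\cite[Sect.~4]{msw-p2me3-IJM83} is to start from their closed polyhedron all of whose faces are convex $(d{+}4)$-gons and then delete a suitable set of faces so that each surviving face loses exactly four of its neighbors and the surviving faces realize $Q_d$; the two approaches lead to the same kind of verification.)

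The main difficulty --- and the genuine content of the construction --- will be making the $2^d$ local modifications \emph{coherently and simultaneously}: the placement of the second copy and the choice of designated free sides must be arranged so that all $2^d$ tab pairs can be realized as honest shared sides at once, so that every polygon remains strictly convex, so that all polygons stay pairwise interior-disjoint, and so that no two polygons non-adjacent in $Q_{d+1}$ accidentally acquire a shared side. The recursive, onion-like layout of~\cite[Sect.~4]{msw-p2me3-IJM83} is what makes such a coherent choice possible, and we will present it together with figures for $Q_2$, $Q_3$, and $Q_4$. Any stray side contacts left over from the nesting can be removed afterwards using \Cref{prop:subgraphsSubdivisions}, and the remaining checks (convexity of the modified polygons and the exact side count) are routine.
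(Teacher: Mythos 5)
There is a genuine gap, and you have in fact pointed at it yourself: everything in your argument up to the sentence ``the main difficulty \dots will be making the $2^d$ local modifications coherently and simultaneously'' is routine bookkeeping, and the one step that carries the actual content --- how to place the second copy so that \emph{all} $2^d$ designated side pairs become honest shared sides at once, with no spurious contacts, no loss of strict convexity, and interior-disjointness between the copies --- is deferred to ``the recursive, onion-like layout of~\cite{msw-p2me3-IJM83}'' and to future figures. Your inductive invariant is purely combinatorial ($2^d$ strictly convex $(d{+}4)$-gons, $d$ shared sides, four free sides each), and such an invariant cannot drive the induction: a single rigid placement of the second copy must bring $2^d$ prescribed sides into exact coincidence with their twins, which requires strong \emph{geometric} control over where those free sides sit, not just a count of them. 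Without specifying that control, the ``tab'' step is a description of what one would like to happen, not a construction.

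The paper (following McMullen, Schulz and Wills) resolves exactly this point with a geometric invariant plus a global, not local, doubling move. The invariant is that after each step all polygons have (almost) the same orthogonal projection onto the $xy$-plane, namely a unit square whose upper-right corner is replaced by a convex chain; the chain sides are the already-shared ones and the four remaining square sides are the free ones. The step is then: shear along the $z$-axis and shift so that in every polygon exactly the side at $x=1$ drops below the $xy$-plane, cut with the $xy$-plane (each polygon loses that side and gains a side lying \emph{in} the $xy$-plane, disjoint from all other polygons), reflect a copy of the whole surface across the $xy$-plane, and glue each polygon to its mirror image along that planar side. Because all new contacts lie in one plane and the second copy is the mirror image, coherence of all $2^d$ contacts, disjointness of the two copies, and the absence of unwanted adjacencies are automatic --- precisely the three things your tab scheme leaves open. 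Finally a corner is sliced off (turning $(d{+}4)$-gons into $(d{+}5)$-gons) and a projective transformation restores the invariant. Note also that this construction is spatial already at the step $Q_1\to Q_2$, so your claim that your scheme ``can all be carried out in the plane'' for $d\le 3$ is an additional unsupported assertion within your framework (though some planar realization of $Q_2,Q_3$ exists by \Cref{planar:2d}, that is a different construction and does not yield the uniform $(d{+}4)$-gon structure you need for the induction). To repair your write-up, replace the local tab mechanism by (or supply an equivalent of) the projection invariant and the shear--cut--reflect--glue step.
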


The main building block in their construction is a polyhedral surface
whose adjacency graph is a $(p-4)$-hypercube.  In fact, we observed
that the adjacency graph of the polyhedron they finally construct is
the Cartesian product of $Q_{p-4}$ and a cycle graph $C_n, n\geq 3$.
For the first few steps of their inductive construction; see
\Cref{fig:mcmullen}.

\begin{figure}[ht]
  \begin{subfigure}[b]{.3\textwidth}
    \centering \includegraphics[height=4.5cm]{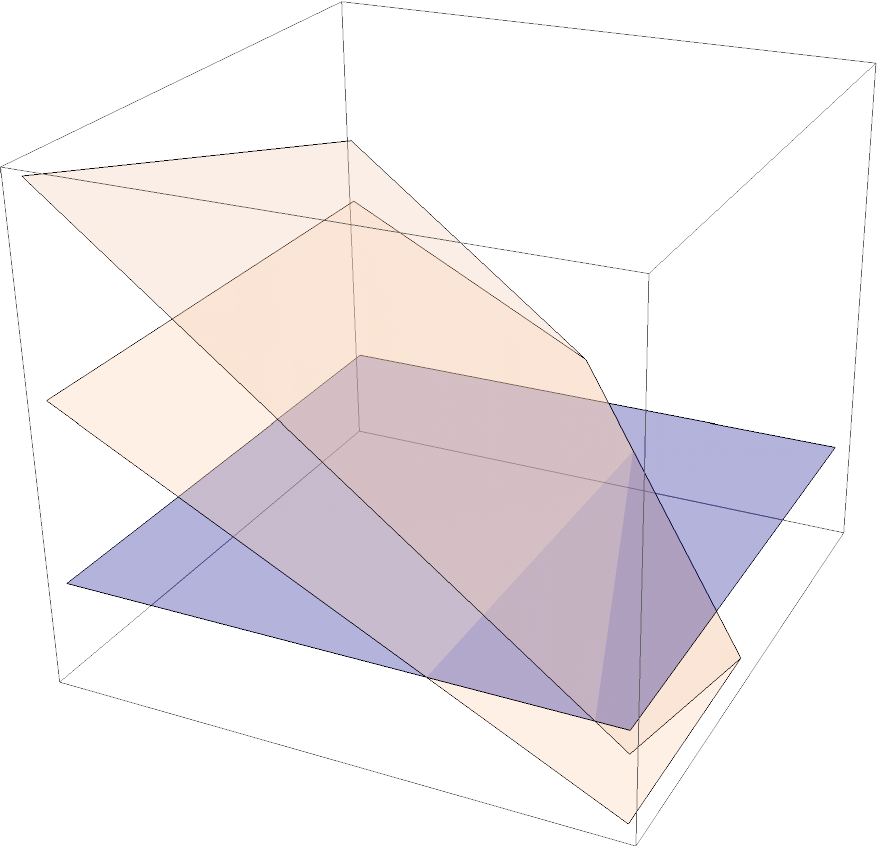}
    \caption{realization of $Q_1$ (orange), xy-plane (purle)}
    \label{fig:mcmullen1}
  \end{subfigure}
  \hfill
  \begin{subfigure}[b]{.2\textwidth}
    \centering \includegraphics[height=4.5cm]{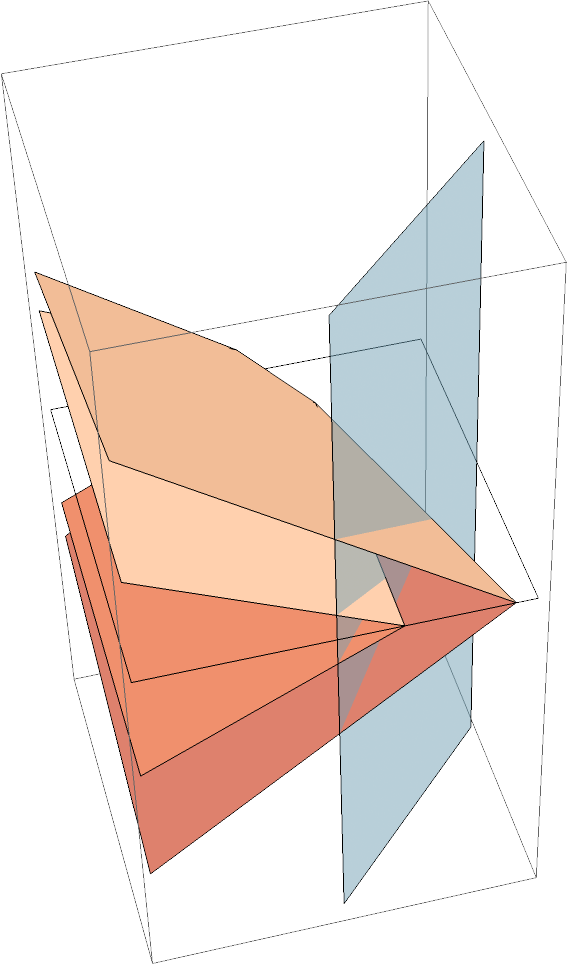}
    \caption{inductive step, cutting plane blue}
    \label{fig:mcmullen2}
  \end{subfigure}
  \hfill
  \begin{subfigure}[b]{.28\textwidth}
    \centering \includegraphics[height=4.5cm]{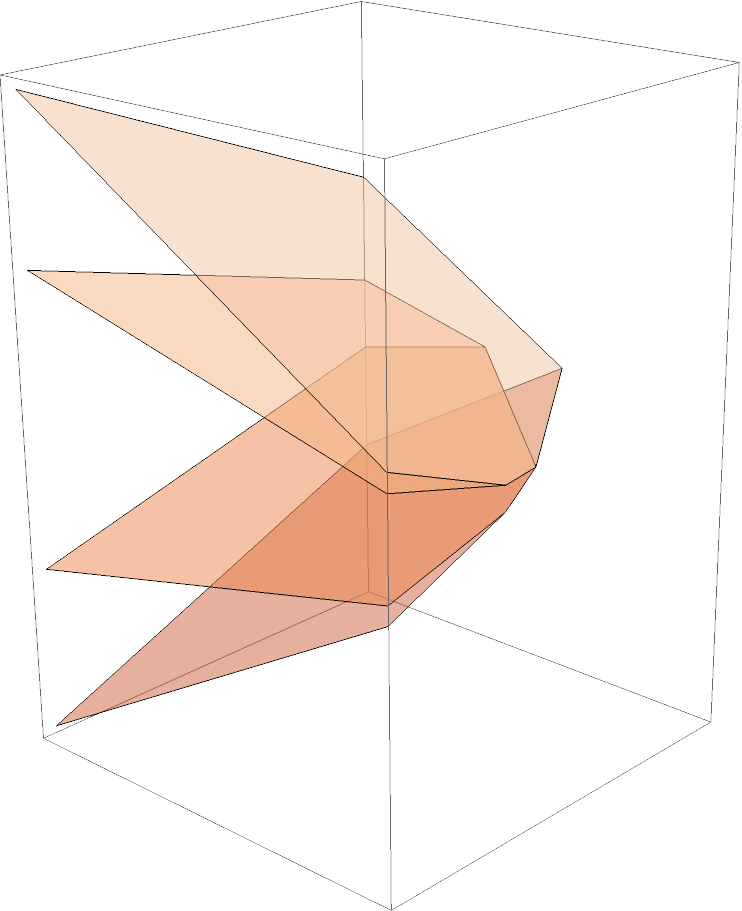}
    \caption{realization of $Q_2$}
    \label{fig:mcmullen3}
  \end{subfigure}
  \caption{The inductive construction of McMullen et
    al.~\cite{msw-p2me3-IJM83}}
  \label{fig:mcmullen}
\end{figure}

Recall that the $d$-hypercube has $2^d$ vertices.  The base case for
$d=0$ is given by a single $4$-gon, namely by the unit square.  What
follows is a series of inductive steps. In every step, the value of
$d$ increases by one and the number of polygons doubles. Before
explaining the step, we state the invariants of the construction. We
label the corners of a polygon with $p_1,\ldots, p_k$.  After every
step, the orthogonal projection into the xy-plane looks like the unit
square in which we have replaced the upper right corner with a convex
chain as shown in \Cref{fig:mcmullen2d}(a). In particular $p_1$ is
mapped to $(0,0)$, $p_2$ is mapped to $(0,1)$ and $p_k$ is mapped to
$(1,0)$.  For every polygon the sides $p_ip_{i+1}$ for
$i\le 3 \le k-2$ (non-vertical, non-horizontal in the projection) will
already have two incident polygons, the four other sides
$p_1p_2,p_2p_3,p_{k-1}p_k,p_kp_1$ are currently incident to only one
polygon.

\begin{figure}[htb]
  \centering
  \begin{tabular}{@{}cp{.01cm}cp{.01cm}c@{}}
    \includegraphics[page=1]{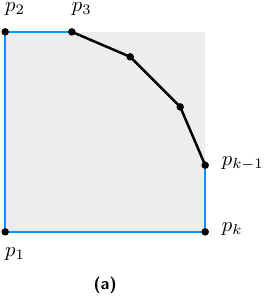}
    && \includegraphics[page=2]{mcmullen2d}
    && \includegraphics[page=3]{mcmullen2d}
  \end{tabular}
  
  \caption{Projection of a polygon into the xy-plane in the
    construction of McMullen et al.  The gray rectangle depicts the
    unit square.  Edges incident to only one polygon are drawn in
    blue. (a)~The start configuration for $d+4=7$.  (b)~Cutting with
    the xy-plane after the shear that puts only $e$ below the xy-plane
    (before glueing the reflected copy). (c)~Slicing off a corner to
    get the initial situation for $d+4=8$ modulo a projective
    transformation.}
  \label{fig:mcmullen2d}
\end{figure}

We explain next how to execute the inductive step. Suppose that we
have a polyhedral surface where every polygon is a $(d+4)$-gon
fulfilling our invariant. We apply a shear along the z-axis to assure
that for every polygon the corners $p_{k-1}$ and $p_k$ have smaller
z-coordinates than every corner of a polygon that is not $p_{k-1}$ or
$p_k$.  We then shift the whole surface such that exactly the sides
$p_{k-1}p_k$ lie completely below the xy-plane; see also
\Cref{fig:mcmullen1}.  These transformations do not change the
projections of the polygons into the xy-plane. We then cut the surface
with the xy-plane and only keep the upper part.  By this we slice away
one of the sides in all polygons but also add a side that lies in the
xy-plane; see \Cref{fig:mcmullen2d}(b).  Each polygon now has a side
$p'_{k-1}p'_k$ that lies in the xy-plane and is disjoint from all
other polygons.  We now take a copy of the surface at hand and reflect
it across the xy-plane. Every polygon of the original (unreflected)
surfaces is now glued to its reflected copy via the common side in the
xy-plane.  With this step, we already have transformed the adjacency
graph from a $d$-hypercube to a $(d+1)$-hypercube. We only need to
bring the surface back into the shape required by the invariant. To do
so, we cut off a corner in every polygon (see
\Cref{fig:mcmullen2d}(c)) by slicing the whole construction with an
appropriate plane orthogonal to the xy-plane; see also
\Cref{fig:mcmullen2}.  This turns all $(d+4)$-gons into $(d+5)$-gons.
In particular, in every polygon we cut off $p'_k$ and add two corners
$p''_k$ and $p''_{k+1}$ as shown in~\Cref{fig:mcmullen2d}(c).
Finally, we apply a projective transformation to assure that the
invariant holds in the end of the induction step.  Such a
transformation can be obtained as follows. Assume that the line
connecting $p''_k$ and $p''_{k+1}$ in the xy-plane, has the form
$x=ay+b$, for some parameter $a$ and $b$.  Then the transformation is
given by
\[ (x,y,z) \mapsto \frac{1}{ay+b} (x, y(a+b),z). \]
It can be observed that this mapping leaves the projection of the
points $p_1,p_2$ into the xy-plane for every polygon
stationary. Moreover, for every polygon, the line containing $p''_k$
and $p''_{k+1}$ will be mapped to the line $x=1$ and the line
containing of $p_2$ and $p_3$ will be mapped to the line $y=1$ when
projected into the xy-plane.
Figures~\ref{fig:mcmullen1}--\ref{fig:mcmullen3} show spatial images
of this construction.

\subparagraph{Connection to a problem of coloring adjacency graphs}
Thomassen~\cite[page~98, Problem~2]{thomassenColorCriticalGraphs}
asked whether the adjacency graph of a polyhedral surface in
$\mathbb R^3$ which is homeomorphic to $S_g$ (an orientable surface of
genus $g$) has chromatic number bounded by some absolute constant. A
typical approach for proving this is to show that the graph has
bounded average degree. %
However, \cref{prop:hypercube} shows that the average degree is
unbounded, so another approach is needed.

\section{Bounds on the Density}
\label{sec:density-bounds}

It is an intriguing question how dense adjacency graphs of
convex-polyhedral surfaces can be.  In this section, we use
realizability and non-realizability results from the previous sections
to derive asymptotic bounds on the maximum density of such graphs,
which we phrase in terms of the relation between their number of
vertices and edges.

Let $\mathcal G_n$ be the class of graphs on $n$ vertices with a
realization as a convex-polyhedral surface in~$\mathbb{R}^3$.
Further, let $e_{\max}(n)=\max_{G \in \mathcal G_n} |E(G)|$ be the
maximum number of edges that a graph in $\mathcal G_n$ can have.

\begin{corollary}
  \label{cor:density}
  For any positive integer $n$, $e_{\max}(n)\in \Omega(n \log n)$ and
  $e_{\max}(n)\in \bigO(n^{9/5})$.
\end{corollary}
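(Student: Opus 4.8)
The plan is to establish the two bounds separately, each as a short consequence of a result already proved above. For the lower bound, I would invoke the realizability of hypercubes from \Cref{prop:hypercube}: the $d$-hypercube $Q_d$ has $2^d$ vertices and $d\cdot 2^{d-1}$ edges. Given an arbitrary $n$, set $d=\lfloor\log_2 n\rfloor$, so that $n/2<2^d\le n$. Realize $Q_d$ by a convex-polyhedral surface and, if $n>2^d$, add $n-2^d$ pairwise disjoint tiny triangles, placed far away from the construction and from one another; these extra polygons contribute isolated vertices only, so the surface realizes a graph in $\mathcal G_n$ with $d\cdot 2^{d-1}>\tfrac14\,n\lfloor\log_2 n\rfloor$ edges. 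Hence $e_{\max}(n)\in\Omega(n\log n)$.

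For the upper bound, I would combine \Cref{thm:k581} with the K\H{o}v\'{a}ri--S\'{o}s--Tur\'{a}n theorem. By \Cref{thm:k581}, $K_{5,81}$ is not subisomorphic to $\Gra(\Sur)$ for any convex-polyhedral surface~$\Sur$; in particular, no graph $G\in\mathcal G_n$ contains $K_{5,81}$ as a subgraph. The K\H{o}v\'{a}ri--S\'{o}s--Tur\'{a}n bound states that an $n$-vertex graph with no $K_{s,t}$ subgraph (for $s\le t$) has $\bigO(n^{2-1/s})$ edges; applying it with $s=5$ and $t=81$ gives $|E(G)|\in\bigO(n^{2-1/5})=\bigO(n^{9/5})$ for every $G\in\mathcal G_n$, and therefore $e_{\max}(n)\in\bigO(n^{9/5})$.

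Neither step poses a genuine obstacle: the lower bound is a direct count once \Cref{prop:hypercube} is available, the only mild care being the padding of $n$ up to the next power of two with isolated vertices (using that the class of adjacency graphs of convex-polyhedral surfaces is trivially closed under disjoint union, as disjoint pieces can be placed far apart); and the upper bound is an off-the-shelf application of K\H{o}v\'{a}ri--S\'{o}s--Tur\'{a}n. The one conceptual point worth flagging is that \Cref{thm:k581} forbids $K_{5,81}$ as a \emph{subgraph} (equivalently, a subisomorphic copy), which is exactly the hypothesis needed for K\H{o}v\'{a}ri--S\'{o}s--Tur\'{a}n; forbidding it only as a minor would not suffice, and, as noted after \Cref{obs:Knsub}, the class is not minor-closed in any case.
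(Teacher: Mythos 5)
Your proposal is correct and follows essentially the same route as the paper: the lower bound via the hypercube realizations of \Cref{prop:hypercube} (counting $d\cdot 2^{d-1}$ edges on $2^d$ vertices), and the upper bound by combining \Cref{thm:k581} with the K\H{o}v\'ari--S\'os--Tur\'an theorem for $s=5$, $t=81$. The only difference is your explicit padding with isolated polygons to handle values of $n$ that are not powers of two, a detail the paper leaves implicit but which your argument handles correctly.
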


\begin{proof}
  For the lower bound, note that by \Cref{prop:hypercube}, every
  hypercube is the adjacency graph of a convex-polyhedral surface.  As
  the $d$-dimensional hypercube has $2^d$ vertices and $2^d \cdot d/2$
  edges, the bound follows.

  For the upper bound, we use that, by \Cref{thm:k581}, the adjacency
  graph of a convex-polyhedral surface cannot contain $K_{5,81}$ as a
  subgraph.  It remains to apply the K\H{o}vari--S\'os--Tur\'an
  Theorem~\cite{KST54}, which states that an $n$-vertex graph that has
  no $K_{s,t}$ as a subgraph can have at most $\bigO(n^{2-1/s})$
  edges.
\end{proof}

Before being aware of the result of McMullen et
al.~\cite{msw-p2me3-IJM83}, we constructed a family of surfaces with
(large, but) constant average degree.  Our construction is not
recursive and therefore easier to understand and visualize; for a
sketch see \Cref{fig:6n}, a detailed description can be found in a
preprint version of this
article~\cite[Appendix~C]{akklsvw-dgps-arXiv21}.  Note that some
polygons in our construction have polynomial degree.

\begin{proposition}
  \label{thm:6n}
  There is an unbounded family of convex-polyhedral surfaces
  in~$\mathbb{R}^3$ whose adjacency graphs have average vertex degree
  $12-o(1)$.
\end{proposition}

\begin{figure}[tb]
  \begin{subfigure}[b]{0.38\textwidth}
    \centering \includegraphics[width=\textwidth]{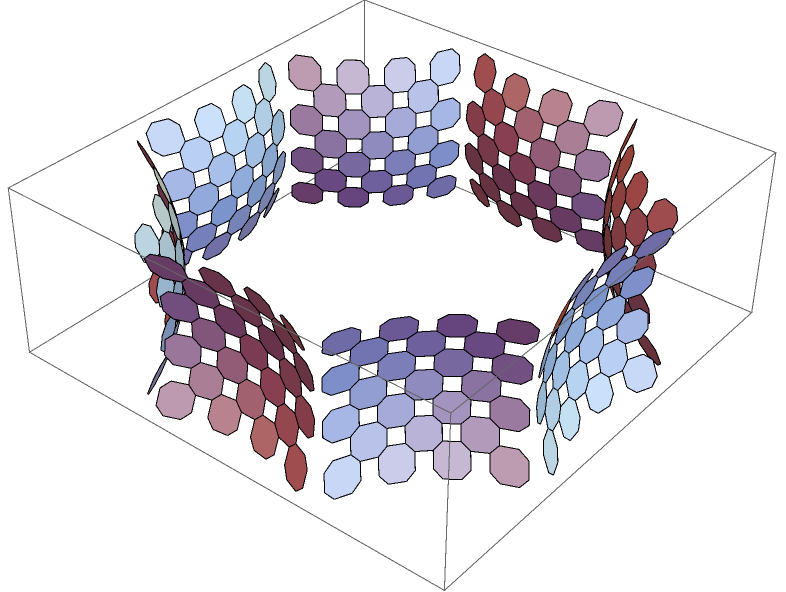}
    \caption{placement of octagon grids}
    \label{subfig:placement_gamma}
  \end{subfigure}
  \hfill
  \begin{subfigure}[b]{0.22\textwidth}
    \centering \includegraphics[width=.92\textwidth]{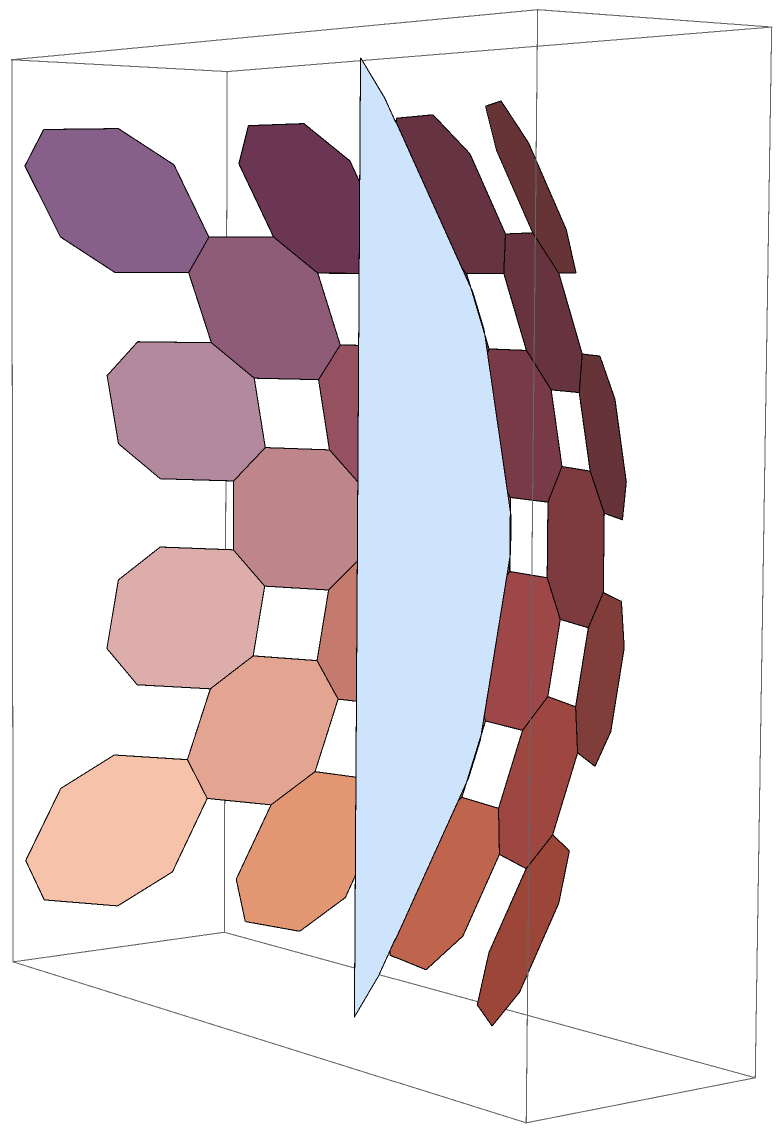}
    \caption{a vertical polygon}
    \label{subfig:vertical_polygon}
  \end{subfigure}
  \hfill
  \begin{subfigure}[b]{0.35\textwidth}
    \centering \includegraphics[width=.85\textwidth]{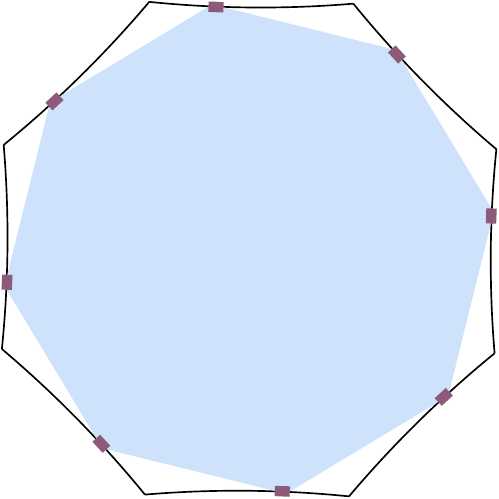}
    \caption{a horizontal polygon from above}
    \label{subfig:horizontal_polygons}
  \end{subfigure}
  \caption{A family of convex-polyhedral surfaces whose adjacency
    graphs have average vertex degree $12-o(1)$.  The main building
    block of our construction consists of $m$ regular octagons
    arranged in a truncated square tiling, which is lifted to the
    paraboloid; see (b).  We place $m$ copies of this gadget in a
    cyclic fashion; see (a).  To increase the average degree, we
    create $\bigO(m\sqrt m)$ vertical and $\bigO(m)$ horizontal
    polygons.  The vertical polygons are attached to the ``outside''
    of the bent octagon grids (see (b); the horizontal polygons are
    place in the center of our construction such that each of them
    touches each grid along a single polygon side (see (c)).  The
    resulting construction contains some unwanted overlaps and
    intersections, which can be removed by modifying the initial grid
    structure slightly.}
  \label{fig:6n}
\end{figure}

\section{Conclusion and Open Problems}

In this paper, we have studied the class of graphs that can be
realized as adjacency graphs of (convex-)polyhedral surfaces.
\cref{cor:density} bounds the maximum number $e_{\max}(n)$ of edges in
realizable graphs with $n$ vertices by $\Omega(n \log n)$ and
$\bigO(n^{9/5})$.  It would be interesting to improve upon these
bounds.
\begin{question}
  What is the maximum number $e_{\max}(n)$ of adjacencies that a
  convex-polyhedral surface with $n$ polygons can have?
\end{question}
We conjecture that realizability is \NP-hard to decide.
\begin{question}
  What is the computational complexity to decide for a given graph~$G$
  whether there exists a convex-polyhedral surface~\Sur such that
  $\Gra(\Sur) \simeq G$?
\end{question}
The following question is related to the previous question regarding
recognition.
\begin{question}
  Which structural properties are necessary or sufficient for
  admitting side-contact representations with convex polygons
  in~$\mathbb{R}^3$?
\end{question}

\subparagraph{Data availability statement.}

Data sharing is not applicable to this article as no datasets were
generated or analyzed during the current study.

\bibliographystyle{plainurl}
\bibliography{abbrv,contacts}

\end{document}